\newtheorem{corollary}{Corollary}
\newtheorem{lemma}{Lemma}
\newtheorem{proposition}{Proposition}
\newtheorem{remark}{Remark}
\newcommand\blfootnote[1]{%
  \begingroup
  \renewcommand\thefootnote{}\footnote{#1}%
  \addtocounter{footnote}{-1}%
  \endgroup
}
\begin{document}

\title{\huge Rate-Splitting Multiple Access for Downlink MIMO: A Generalized Power Iteration Approach}

\author{Jeonghun~Park, Jinseok~Choi, Namyoon~Lee, Wonjae~Shin, and H. Vincent Poor

\thanks{J. Park is with the School of Electronics Engineering, Kyungpook National University, South Korea (e-mail: {\texttt{jeonghun.park@knu.ac.kr}}). J. Choi is with Department of Electrical Engineering, Ulsan National Institute of Science and Technology, South Korea (e-mail: {\texttt{jinseokchoi@unist.ac.kr}}). N. Lee is with Department of Electrical Engineering, POSTECH, South Korea (e-mail: {\texttt{nylee@postech.ac.kr}}). W. Shin is with Department of Electrical and Computer Engineering, Ajou University, South Korea (email: {\texttt{wjshin@ajou.ac.kr}}). H. V. Poor is with Department of Electrical and Computer Engineering, Princeton University, Princeton, NJ, USA. (email: {\texttt{poor@princeton.edu}})
}

\thanks{This work was supported by the National Research Foundation of Korea (NRF) grant funded by the Korea government (MSIT) (No. 2019R1G1A1094703 and No. 2021R1C1C1004438), and the Institute of Information and Communications Technology Planning and Evaluation under Grant
2021-0-00467.}
}

\maketitle \setcounter{page}{1} 

\begin{abstract} 
Rate-splitting multiple access (RSMA) is a general multiple access scheme for downlink multi-antenna systems embracing both classical spatial division multiple access and more recent non-orthogonal multiple access. Finding a linear precoding strategy that maximizes the sum spectral efficiency of RSMA is a challenging yet significant problem. In this paper, we put forth a novel precoder design framework that jointly finds the linear precoders for the common and private messages for RSMA. Our approach is first to approximate the non-smooth minimum function part in the sum spectral efficiency of RSMA using a LogSumExp technique. Then, we reformulate the sum spectral efficiency maximization problem as a form of the log-sum of Rayleigh quotients to convert it into a tractable form. By interpreting the first-order optimality condition {of the reformulated problem} as an eigenvector-dependent nonlinear eigenvalue problem, we reveal that the leading eigenvector of the derived optimality condition is a local optimal solution. To find the leading eigenvector, we propose an algorithm inspired by a power iteration. Simulation results show that the proposed RSMA transmission strategy provides significant improvement in the sum spectral efficiency compared to the state-of-the-art RSMA transmission methods.

\end{abstract}

\begin{IEEEkeywords}
Rate-splitting multiple access (RSMA), multi-user MIMO, imperfect channel state information (CSI), sum spectral efficiency maximization, generalized power iteration.
\end{IEEEkeywords}

\section{Introduction}

\blfootnote{A part of this paper was presented at the Workshop on Rate-Splitting (Multiple Access) for Beyond 5G at the 2021 IEEE Wireless Communications and Networking Conference, Nanjing, China \cite{park:wcnc:21}.} 
Multi-user multiple-input multiple-output (MU-MIMO) downlink transmissions can provide extensive gains in spectral efficiency by serving multiple users with a shared time-frequency resource \cite{caire:tit:03, chris:twc:08, choi:twc:20}. Assuming perfect channel state information at the transmitter (CSIT), a transmitter is able to send information symbols along with multiple linear precoding vectors to different users simultaneously by mitigating inter-user interference. 
In practice, however, the theoretical gains of downlink MU-MIMO transmissions can greatly vanish due to the inaccuracies of the CSIT. 
For example, considering the frequency division duplex (FDD) systems, the downlink channel has to be estimated at the receiver first and sent back to the transmitter via a finite-rate feedback link \cite{jindal:tit:06, park:twc:16}, wherein quantization error on CSIT is inevitable. 
For this reason, in order to attain the de facto MU-MIMO spectral efficiency gains, it is crucial to design a downlink MU-MIMO transmission strategy that achieves high spectral efficiency under imperfect CSIT.

Rate-splitting multiple access (RSMA) is a robust downlink multiple access technique, especially when a transmitter has inaccurate knowledge for downlink CSI. Unlike the conventional spatial division multiple access (SDMA), in RSMA \cite{joudeh:16:tcom, joudeh:16:tsp, dai:16:twc, li:jsac:20}, the transmitter harnesses the rate-splitting strategy that breaks user messages into common and private parts {\color{black}{in order to dynamically manage interference caused by imperfect CSIT.}}
The transmitter constructs a common message by jointly encoding the common parts of the users' split messages. The rate for this common message is carefully controlled so that all the users can decode it. The transmitter also encodes the private parts of the users' messages to generate private information symbols. Then, the transmitter sends the common and private information symbols along with linear precoding vectors in a non-orthogonal manner. Each user decodes and eliminates the common message by performing successive interference cancellation (SIC) while treating the residual interference as noise. It then decodes the desired private message. Thanks to the rate-splitting encoding and SIC decoding, RSMA has been shown to outperform dirty paper coding (DPC) when imperfect CSIT is given \cite{mao:tcom:20}. 

To clearly understand the gains of RSMA over SDMA, it is instructive to consider a simple case of a two-user multi-antenna broadcast channel with imperfect CSIT. From an information-theoretic viewpoint, when applying linear precoding with imperfect CSIT in a two-user multi-antenna broadcast channel, the channel can be interpreted as a virtual two-user interference channel with transmitter cooperation, in which the channel gains of desired and interfering links are determined by the precoding vectors and the channel vectors. In this equivalent interference channel, the quasi-optimal transmission strategy is the Han-Kobayashi scheme \cite{han:tit:81}, i.e., splitting messages into common and private parts and allocating the power according to the relative channel gains between the interfering and the desired link \cite{etkin:tit:08}. Motivated by this, RSMA mimics this near capacity-achieving strategy in downlink MIMO. 


To reap the spectral efficiency gains by using RSMA in downlink MIMO, it is significant to find the optimal linear precoding solution; yet it is challenging to find such a precoding vector. 
Unlike the sum spectral efficiency maximization problem for SDMA relying on private messages only, the problem for RSMA has an additional unique challenge induced by the common message rate, which is the minimum of all the achievable rates for the common message at the users. This minimum function is non-smooth, making the sum-spectral efficiency maximization problem for RSMA challenging to solve. In this work, we put forth a new approach for designing a precoder to maximize the sum spectral efficiency of multi-antenna RSMA with imperfect CSIT.

\subsection{Related Works}


Recently, to cope with imperfect CSIT in downlink MIMO systems, the idea of rate-splitting has been actively re-explored as a multiple access technique, i.e., RSMA \cite{clercks:commmag:16}. 
In \cite{yang:13:tit}, it was shown that RSMA provides sum degrees-of-freedom gains in multi-antenna broadcast channels where erroneous CSIT is given. 
Exploiting the idea of \cite{yang:13:tit}, in \cite{hao:tcom:15}, the achievable spectral efficiency was analyzed while fixing the precoder for the common message as a random precoder and the precoder for the private messages as ZF. 

Besides the theoretical analysis, there exist several prior works that have developed practical linear precoding designs for RSMA multi-antenna systems.
In \cite{joudeh:16:tcom}, a linear precoding design was proposed based on the weighted minimum mean square (WMMSE) approach \cite{chris:twc:08}. Specifically, a non-convex original problem was transformed into a quadratically constrained quadratic program (QCQP) by using the equivalence between the sum spectral efficiency maximization and the sum mean square error (MSE) minimization problem. Subsequently, an interior point method was used to solve the QCQP. Employing the same idea, in \cite{joudeh:16:tsp, yalcin:twc:21}, a max-min fairness problem with RSMA was addressed. 
In \cite{li:jsac:20}, a linear precoding method for general RSMA was proposed by exploiting a concave-convex procedure (CCCP) that successively approximates the original problem into convex forms. 
To evaluate the performance of RSMA in practical settings, e.g., finite constellation and implementable channel coding, \cite{dizdar:pimrc:20} performed a link-level simulation in RSMA downlink MIMO systems. In \cite{yang:icc:20}, considering a single-antenna downlink channel, a power control method was proposed by incorporating the SIC constraint. In \cite{dai:twc:16}, considering downlink massive MIMO, it was shown that hierarchical RSMA that uses multiple-layer partial common messages can be well-harmonized in massive MIMO systems thanks to its spatial covariance separability \cite{adhi:tit:13}. 
Beyond the sum spectral efficiency maximization in downlink, other variants also exist. For example, RSMA for energy efficiency maximization \cite{mao:tcom:19}, RSMA with hardware impairments \cite{papa:tvt:17}, RSMA in joint MIMO radar and communication system \cite{xu:jsac:21},
and RSMA in uplink channels \cite{zeng:tvt:19} have been studied in the context of optimization for RSMA. 
Further, multi-antenna RSMA in interference channels \cite{hao:tcom:17} was also presented.

A key obstacle of the RSMA linear precoding design arises from the common message rate that should be determined as the minimum of all the achievable rates. To resolve this, the conventional methods use convex relaxation. Namely, an original non-convex problem is relaxed into a convex problem first, and then this convexified problem is put into an off-the-shelf optimization toolbox such as CVX to obtain a solution. 
A limitation of such approaches is that the optimization toolbox is hard to implement in practical hardware due to its extremely high complexity \cite{krivochiza:access:19}. For this reason, the existing precoding optimization methods for RSMA are hardly used in practice. Therefore, this paper proposes a new optimization framework for MIMO RSMA that outperforms the existing methods in terms of complexity and also performance. 


\subsection{Contributions}

This paper proposes a new approach for linear precoding optimization in downlink MIMO with RSMA.  
The contributions of this paper are listed as follows. 

\begin{itemize}
    \item Considering an imperfect CSIT model, in which the CSI error statistic is modeled as complex Gaussian with zero-mean and a certain covariance matrix, we derive a lower bound on the instantaneous sum spectral efficiency for RSMA. In contrast to the sum spectral efficiency maximization for SDMA with imperfect CSIT, this lower bound entails the non-smooth minimum function for the common message rate. To convert the non-smooth minimum function into a tractable form, we take the LogSumExp technique, which offers a tight approximation of the minimum function to a smooth function. Then, by representing all optimization variables (precoding vectors) onto a higher dimensional vector, we reformulate the lower bound of the instantaneous sum spectral efficiency for RSMA into a tractable non-convex function in the form of the log-sum of Rayleigh quotients. 

    \item Using the derived lower bound with the smooth function approximation, we establish the first-order optimality condition for the sum spectral efficiency maximization problem. Remarkably, it is shown that the derived condition is cast as an eigenvector-dependent nonlinear eigenvalue problem \cite{cai:siam:18}, where the optimization variable behaves as an eigenvector, and the objective function behaves as an eigenvalue. Accordingly, we reveal that if we find the leading eigenvector that ensures the derived optimality condition, the best local optimal solution is obtained, maximizing the approximate lower bound of the instantaneous sum spectral efficiency for RSMA. 
    
    \item To obtain the leading eigenvector of the derived condition, we put forth a novel algorithm inspired by a power iteration, referred to as generalized power iteration for rate-splitting (GPI-RS). Adopting the conventional power iteration principle, the idea of GPI-RS is to compute the leading eigenvector iteratively. 
    The solution obtained by GPI-RS jointly provides the precoding directions and power allocation for the common and private messages. Notably, we do not rely on CVX in the proposed algorithm; thereby, it is more beneficial to implement in practical hardware. In addition to this, the computational complexity is less compared to the existing WMMSE-based method \cite{joudeh:16:tcom}. 
    Later, we also generalize the proposed GPI-RS for a case in which multiple-layer RSMA is used. In multiple-layer RSMA, not only common message, but also the partial common message that includes messages of a subset of the users are jointly used. We show that the proposed method is suitably extended to this case. 
    
    \item  Simulation results show that the proposed GPI-RS provides spectral efficiency gains over the existing methods, including the conventional convex relaxation-based WMMSE method \cite{joudeh:16:tcom} in various system environments. To be specific, the proposed GPI-RS provides around $20\%$ sum spectral efficiency gains, {\color{black}{while consuming only $6\sim7\%$ of the computation time compared to the conventional method.}} Further, we empirically confirm that the GPI-RS converges well. 
    
    
\end{itemize}

\textit{Notation}:
The superscripts $(\cdot)^{\sf T}$, $(\cdot)^{\sf H}$, and $(\cdot)^{-1}$ denote the transpose, Hermitian, and matrix inversion, respectively. ${\bf{I}}_N$ is the identity matrix of size $N \times N$, Assuming that ${\bf{A}}_1, ..., {\bf{A}}_N \in \mathbb{C}^{K \times K}$, ${\bf{A}} = {\rm blkdiag}\left({\bf{A}}_1, ...,{\bf{A}}_n,..., {\bf{A}}_N \right)$ is a block-diagonal matrix concatenating ${\bf{A}}_1, ..., {\bf{A}}_N$.

\section{System Model} \label{sec:sys_model}

\subsection{Channel Model}
We consider a single-cell downlink MU-MIMO system, where a base station (BS) equipped with $N$ antennas serves $K$ single-antenna users. We denote a user set as $\CMcal{K} = \{1, \cdots ,K\}$. 
The channel vector between the BS and  user $k$ is denoted as ${\bf{h}}_k \in \mathbb{C}^{N}$ for $k \in \CMcal{K}$, where ${\bf{h}}_k$ is generated based on the spatial covariance matrix ${\bf{R}}_k$, i.e., ${\bf{R}}_k = \mathbb{E}\left[ {\bf{h}}_k {\bf{h}}_k^{\sf H} \right]$. 
For constructing the channel covariance matrix, we adopt the one-ring model \cite{adhi:tit:13}. Specifically, we assume that the BS is equipped with uniform circular array with radius $\psi D$ where $\psi$ denotes a signal wavelength and $D = \frac{0.5}{\sqrt{(1-\cos\left( 2\pi/N \right))^2 + \sin^2\left(2\pi/N \right) }}$. Then, the channel correlation coefficient between the $n$-th antenna and $m$-th antenna corresponding to  user $k$ is defined as
\begin{align} \label{eq:spatial_cov}
\left[{\bf{R}}_k \right]_{n,m} = \frac{1}{2\Delta_k} \int_{\theta_k - \Delta_k}^{\theta_k + \Delta_k} e^{-j \frac{2\pi}{\psi} \Psi(x) ({\bf{r}}_n - {\bf{r}}_m)} {\rm d} x,
\end{align}
 where $\theta_k$ is angle-of-arrival (AoA) of  user $k$, $\Delta_k$ is the angular spread of  user $k$, $\Psi(x) = \left[\cos(x), \sin(x) \right]$, and ${\bf{r}}_n$ is the position vector of the $n$-th antenna. By employing the Karhunen-Loeve model as in \cite{adhi:tit:13,dai:twc:16}, the channel vector ${\bf{h}}_k$ is represented as
 \begin{align}
 	{\bf{h}}_k = {\bf{U}}_k \Lambda_k^{\frac{1}{2}} {\bf{g}}_k,
 \end{align}
 where $\Lambda_k \in \mathbb{C}^{r_k \times r_k}$ is a diagonal matrix that contains the non-zero eigenvalues of ${\bf{R}}_k$, ${\bf{U}}_k \in \mathbb{C}^{N \times r_k}$ is a collection of the eigenvectors of ${\bf{R}}_k$ corresponding to the eigenvalues in $\Lambda_k$, and ${\bf{g}}_k \in \mathbb{C}^{r_k}$ is an independent and identically distributed channel vector. We assume that each element of ${\bf{g}}_k$ is drawn from $\mathcal{CN}(0,1)$. We consider a block fading model, where ${\bf{g}}_k$ keeps constant within one transmission block. Over the two consecutive transmission blocks, ${\bf{g}}_k$ changes independently. 
 
 We clarify that the applicability of our method does not depend on particular channel model assumptions. 
 We consider the one-ring model in this paper because it is one of the widely used channel models that suitably captures spatial covariance structures of MIMO channels. The proposed method can be applied in any channel model.

\subsection{CSIT Acquisition Model}
This subsection explains the CSIT estimation and the error model used throughout this paper. 
We assume perfect channel state information at the receiver (CSIR), which can be achieved via downlink pilots planted in the data packet as described in LTE and 5G NR. 
In contrast to CSIR, a BS should estimate CSIT, so that only imperfect knowledge of CSIT is allowed. Generally, two approaches are known to estimate the CSIT, each of which is linear MMSE (LMMSE) and limited feedback, respectively. 
LMMSE yields the optimum CSIT estimation performance, provided that the channel is distributed as Gaussian. Nonetheless, LMMSE only can be used when the channel reciprocity holds. On the contrary to that, limited feedback can be employed in any environment. In this paper, we focus on LMMSE, but note that our method is also useful with limited feedback. 



As mentioned above, LMMSE can be exploited when the BS can use channel reciprocity. Specifically, assuming that the uplink and downlink channels are reciprocal, the BS estimates the CSIT from uplink training sent from the users using the MMSE estimation \cite{yin:jsac:13}. For this reason, LMMSE is adequate to use in time division duplex (TDD) systems where the channel reciprocity holds. Using LMMSE, the estimated CSIT is presented as 
    \begin{align}
        \hat {\bf{h}}_k = {\bf{h}}_k -  {\bf{e}}_k,
    \end{align}
    where $ {\bf{e}}_k$ is the CSIT estimation error vector. Since ${\bf{h}}_k$ is distributed as Gaussian, $\hat {\bf{h}}_{k}$ and ${\bf{e}}_k$ are also Gaussian that is independent to each other. The error covariance is obtained as 
    \begin{align}
        \mathbb{E}[{\bf{e}}_{k} {\bf{e}}_k^{\sf H}] = {\bf{\Phi}}_k = {\bf{R}}_k - {\bf{R}}_k \left({\bf{R}}_k + \frac{\sigma^2}{\tau_{\sf ul} p_{\sf ul}} \right)^{-1} {\bf{R}}_k,
    \end{align}
    where $\tau_{\sf ul}$ and $p_{\sf ul}$ are uplink training length and uplink training transmit power. As the uplink training length and power increases to infinity, the error covariance ${\bf{\Phi}}_k = {\bf{0}}$ and the CSIT error ${\bf{e}}_k$ also vanishes; then the BS has the perfect CSIT.

 \subsection{RSMA Signal Model}
Using RSMA, the message for  user $k$ is split into the common message part $s_{{\sf c},k}$ and the private message $s_{k}$. The common message part $s_{{\sf c}, k}$ from each user is combined to encode the common message $s_{\sf c}$. The common message $s_{\sf c}$ is drawn from a public codebook so that any user associated with the BS can decode it. On the contrary, the private message $s_k$ comes from an individual codebook. Therefore it is only decodable to intended users. 

One common message and $K$ private messages are linearly precoded and then superimposed, so that the transmit signal ${\bf{x}} \in \mathbb{C}^{N}$ is given by
\begin{align} \label{eq:transmit_signal}
{\bf{x}} = {\bf{f}}_{\sf c} s_{\sf c} + \sum_{i = 1}^{K} {\bf{f}}_i s_i,
\end{align}
where ${\bf f}_{\sf c} \in \mathbb{C}^N$ and ${\bf f}_i \in \mathbb{C}^N$ are the precoding vectors for the common and private messages respectively with the transmit power constraint: $\left\|{\bf{f}}_{\sf c} \right\|^2 + \sum_{i=1}^{K} \left\| {\bf{f}}_i \right\|^2 \le 1$. We note that not only the direction of each message, but also the power allocated to each message are controlled by the precoding vectors. For example, if $\| {\bf{f}}_{\sf c} \| = 0$, then no common message is delivered; so our RSMA signal model reduces to typical SDMA. 

The received signal at user $k$ for $k \in \CMcal{K}$ is written as 
\begin{align} \label{eq:rx_signal}
y_{k} = {\bf{h}}_{k}^{\sf H} {\bf{f}}_{\sf c} s_{\sf c} +{\bf{h}}_{k}^{\sf H} {\bf{f}}_{k} s_{k}  +  \sum_{\ell = 1, \ell \neq k}^{K} {\bf{h}}_{k}^{\sf H} {\bf{f}}_{\ell} s_{\ell} + z_k,
\end{align}
where $z_k \sim \mathcal{CN}(0,\sigma^2)$ is additive white Gaussian noise. We also assume that $s_{\sf c}$ and $s_k$ are drawn from an independent Gaussian codebook, i.e., $s_{\sf c}, s_k \sim \mathcal{CN}(0, P)$.

\subsection{Performance Characterization} \label{sec:perf_ch}

To characterize the performance of RSMA, we first explain the decoding process performed by each user. Each user first decodes the common message $s_{\sf c}$ by treating all the other private messages as noise. Once the common message is successfully decoded, using SIC, the users remove the common message from the received signal and decode the private messages with a reduced amount of interference. 

To successfully perform SIC, the common message $s_{\sf c}$ should be decodable to every user without any error. To this end, the code rate of the common message $s_{\sf c}$ is set as the minimum of the ergodic spectral efficiencies among the users. 
Accordingly, under the premise that the BS has imperfect CSIT, i.e., ${\bf \hat h}_{k}={\bf h}_k+{\bf e}_k$ for $k\in \CMcal{K}$, the ergodic spectral efficiency of the common message is obtained as \cite{choi:twc:20, joudeh:16:tcom}
\begin{align} \label{eq:se_common}
R_{\sf c} 
&= \min_{k \in \CMcal{K}} \left\{ \mathbb{E}_{\{ {\bf{h}}_k\}} \left[\left. \log_2 \left(1 + \frac{|{\bf{h}}_{k}^{\sf H} {\bf{f}}_{\sf c}|^2}{\sum_{\ell = 1}^{K} |{\bf{h}}_{k}^{\sf H} {\bf{f}}_\ell|^2 + \sigma^2/P} \right) \right. \right]\right\} \nonumber \\
&= \min_{k \in \CMcal{K}} \left\{ \mathbb{E}_{\{\hat {\bf{h}}_k\}} \left[ \mathbb{E}_{\{ {\bf{e}}_k \}} \left[\left. \log_2 \left(1 + \frac{|{\bf{h}}_{k}^{\sf H} {\bf{f}}_{\sf c}|^2}{\sum_{\ell = 1}^{K} |{\bf{h}}_{k}^{\sf H} {\bf{f}}_\ell|^2 + \sigma^2/P} \right) \right| \hat {\bf{h}}_{k } \right] \right]\right\},
\end{align}
{\color{black}{where in \eqref{eq:se_common}, the inner expectation is taken over the randomness associated with the CSIT error $(\mathbb{E}_{\{ {\bf{e}}_k \}}[\cdot])$ within one particular coherence block and the outer expectation is taken over the randomness associated with the imperfect knowledge of the channel fading process $(\mathbb{E}_{\{ \hat {\bf{h}}_k \}}[\cdot])$.}} 
Assuming that the channel code length spans an infinite number of channel blocks and we set the channel coding rate of the common message $s_{\sf c}$ less than or equal to $R_{\sf c}$, no decoding error for $s_{\sf c}$ occurs. Similar to \eqref{eq:se_common}, the ergodic spectral efficiency of the private message $s_k$ after cancelling the common message $s_{\sf c}$ is obtained as \cite{choi:twc:20, joudeh:16:tcom}
\begin{align} \label{eq:se_private}
R_{k} = \mathbb{E}_{\{\hat {\bf{h}}_k\}} \left[ \mathbb{E}_{\{{\bf{e}}_k \}} \left[\left. \log_2 \left(1 + \frac{|{\bf{h}}_{k}^{\sf H} {\bf{f}}_{k}|^2}{\sum_{\ell = 1, \ell \neq k}^{K} |{\bf{h}}_{k}^{\sf H} {\bf{f}}_\ell|^2 + \sigma^2/P}\right) \right| \hat{\bf{h}}_{k } \right] \right].
\end{align}
Since we assume that the common message is successfully eliminated, we observe that there is no interference from the common message in \eqref{eq:se_private}. 
Under the assumption that the channel code length spans an infinite number of channel blocks and the channel coding rate of the private message $s_k$ is less than or equal to $R_k$, the users can successfully decode $s_k$. 

Our main goal is to optimize the precoders using imperfect knowledge on CSIT per each fading block. For this reason, we focus on one particular fading block without loss of generality, allowing to assume that $\hat {\bf{h}}_k$, $k \in \CMcal{K}$ is given. 
In a certain fading block, we can define the instantaneous spectral efficiency. Specifically, the instantaneous spectral efficiency of the common message achieved at user $k$ is defined as
\begin{align} \label{eq:common_k_ins}
    R_{\sf c}^{\sf ins.}(k) = \mathbb{E}_{\{{\bf{e}}_k\}}\left[\left. \log_2 \left(1 + \frac{|{\bf{h}}_{k}^{\sf H} {\bf{f}}_{\sf c}|^2}{\sum_{\ell = 1}^{K} |{\bf{h}}_{k}^{\sf H} {\bf{f}}_\ell|^2 + \sigma^2/P}\right) \right| \hat{\bf{h}}_{k} \right].
\end{align}
The instantaneous spectral efficiency differs from the ergodic spectral efficiency. On the one hand, the ergodic spectral efficiency is the long-term performance that can be achieved when the channel code length spans very long channel blocks. On the other hand, the instantaneous spectral efficiency is the short-term rate expression when taking into account the channel estimation error effect per channel realization. Considering multiple fading blocks, the instantaneous spectral efficiency and the ergodic spectral efficiency are connected as $R_{\sf c} = \mathop{\min}_{k \in \CMcal{K}} \left\{ \mathbb{E}_{\{\hat {\bf{h}}_k\}} [R_{\sf c}^{\sf ins.}(k)] \right\}$. 

Unfortunately, however, \eqref{eq:common_k_ins} is not tractable. The main challenge is that no closed-form exists for the expectation on CSIT error. To address this, we characterize a lower bound by adopting a similar approach in \cite{choi:twc:20}. We rewrite the received signal \eqref{eq:rx_signal} with the CSIT error term as follows: 
\begin{align} \label{eq:rx_signal_new}
    y_k &= {\bf{h}}_k^{\sf H} {\bf{f}}_{\sf c} s_{\sf c} + \sum_{\ell = 1}^{K} {\bf{h}}_k^{\sf H} {\bf{f}}_{\ell} s_k + z_k \nonumber \\
    &\mathop{=}^{(a)} \hat {\bf{h}}_k^{\sf H} {\bf{f}}_{\sf c} s_{\sf c} + \sum_{\ell  = 1}^{K} \hat {\bf{h}}_k^{\sf H} {\bf{f}}_{\ell} s_k +  \underbrace{ {\bf{e}}_k^{\sf H} {\bf{f}}_{\sf c} s_{\sf c} + \sum_{i = 1}^{K} {\bf{e}}_k^{\sf H} {\bf{f}}_{i} s_i}_{(b)} + z_k,
\end{align}
where (a) follows $ {\bf{h}}_k = \hat {\bf{h}}_k + {\bf{e}}_k$. The term (b) is correlated with the common message $s_{\sf c}$, yet it is not tractable due to the CSIT estimation error ${\bf{e}}_k$. To resolve this, employing a concept of generalized mutual information, we treat (b) as independent Gaussian noise, which is the worst case of mutual information. Then a lower bound on the instantaneous spectral efficiency is made as:
\begin{align}
    \nonumber
     R_{\sf c}^{\sf ins.}(k) & \mathop{\ge}^{(c)}   \mathbb{E}_{\{{\bf{e}}_{k}\}}  \left[  \log_2 \left( 1 + \frac{|\hat {\bf{h}}_{k}^{\sf H} {\bf{f}}_{\sf c}|^2}{\sum_{\ell = 1}^{K} |\hat {\bf{h}}_{k}^{\sf H} {\bf{f}}_\ell|^2 + |{\bf{e}}_k^{\sf H} {\bf{f}}_{\sf c}|^2 + \sum_{\ell = 1}^{K} |{\bf{e}}_k^{\sf H} {\bf{f}}_{\ell}|^2  +  \frac{\sigma^2}{P}} \right)   \right] \nonumber \\
     & \mathop{\ge}^{(d)}     \log_2 \left( 1 + \frac{|\hat {\bf{h}}_{k}^{\sf H} {\bf{f}}_{\sf c}|^2}{\sum_{\ell = 1}^{K} |\hat {\bf{h}}_{k}^{\sf H} {\bf{f}}_\ell|^2 + {\bf{f}}_{\sf c}^{\sf H } \mathbb{E}  \left[ {\bf{e}}_k {\bf{e}}_k^{\sf H}   \right] {\bf{f}}_{\sf c}+ \sum_{\ell = 1}^{K} {\bf{f}}_{\ell}^{\sf H} \mathbb{E}  \left[{\bf{e}}_k {\bf{e}}_k^{\sf H} \right] {\bf{f}}_\ell  +  \frac{\sigma^2}{P}} \right)    \nonumber \\
    & \mathop{=}^{(e)}  \log_2 \left( 1 + \frac{|\hat {\bf{h}}_{k}^{\sf H} {\bf{f}}_{\sf c}|^2}{\sum_{\ell = 1}^{K} |\hat {\bf{h}}_{k}^{\sf H} {\bf{f}}_\ell|^2 + {\bf{f}}_{\sf c}^{\sf H} {\bf{\Phi}}_k {\bf{f}}_{\sf c} + \sum_{\ell = 1}^{K} {\bf{f}}_{\ell}^{\sf H} {\bf{\Phi}}_k {\bf{f}}_{\ell}  +  \frac{\sigma^2}{P}} \right) = \bar R_{\sf c}^{\sf ins.} (k), \label{eq:common}
\end{align}
where (c) comes from treating (b) in \eqref{eq:rx_signal_new} as independent Gaussian noise, (d) follows Jensen's inequality, and (e) comes from the CSIT error covariance $\mathbb{E}[{\bf{e}}_k {\bf{e}}_k^{\sf H}] = {\bf{\Phi}}_k$.
A lower bound on the instantaneous spectral efficiency, denoted as $\bar R_{\sf c}^{\sf ins.} (k)$, is derived as a closed-form, so that this can be handled easily. 
Finally, we take another lower bound on the ergodic spectral efficiency. The ergodic spectral efficiency of the common message is represented with $\bar R_{\sf c}^{\sf ins.} (k)$ as 
\begin{align} \label{eq:ergodic_se_inequal_common}
R_{\sf c} = \min_{k \in \CMcal{K}} \left\{ \mathbb{E}_{\{\hat {\bf{h}}_k\}}\left[  R_{\sf c}^{\sf ins.}(k) \right] \right\} &\ge \min_{k \in \CMcal{K}} \left\{ \mathbb{E}_{\{ \hat {\bf{h}}_k \}}\left[ \bar R_{\sf c}^{\sf ins.}(k) \right]  \right\} \nonumber \\
& \mathop{\ge}^{(f)} \mathbb{E}_{\{ \hat {\bf{h}}_{k \in \CMcal{K}} \}}\left[\min_{k \in \CMcal{K}} \left\{\bar R_{\sf c}^{\sf ins.}(k) \right\} \right]
\end{align}
where (f) follows the fact that putting the minimum operator into the expectation does not increase the value. We take \eqref{eq:ergodic_se_inequal_common} as our main objective function for the common message rate. 

Next, we characterize a lower bound on the instantaneous spectral efficiency of the private message. We first define the instantaneous spectral efficiency of the private message $s_k$ in a certain fading block as $R^{\sf ins.}_k$. Using a similar technique to the common message case, we derive a lower bound on $R^{\sf ins.}_{k}$ such as 
\begin{align} \label{eq:se_ins_private}
    \nonumber
    R_k^{\sf ins.} & \mathop{\ge}^{} \mathbb{E}_{\{{\bf{e}}_{k} \}} \Bigg[\log_2 \left(1 + \frac{|\hat {\bf{h}}_{k}^{\sf H} {\bf{f}}_{k}|^2}{\sum_{\ell = 1, \ell \neq k}^{K} |\hat {\bf{h}}_{k}^{\sf H} {\bf{f}}_\ell|^2 + \sum_{\ell = 1}^{K} |{\bf{e}}_k^{\sf H} {\bf{f}}_{\ell}|^2 +
    \frac{\sigma^2}{P}}\right) \Bigg] \nonumber \\
    & \mathop{\ge }^{} \log_2 \left(1 + \frac{|\hat {\bf{h}}_{k}^{\sf H} {\bf{f}}_{k}|^2}{\sum_{\ell = 1, \ell \neq k}^{K} |\hat {\bf{h}}_{k}^{\sf H} {\bf{f}}_\ell|^2 + \sum_{\ell = 1}^{K}{\bf{f}}_{\ell}^{\sf H} {\bf{\Phi}}_k {\bf{f}}_{\ell} + \frac{\sigma^2}{P}}\right) = \bar R_k^{\sf ins.},
\end{align}
Considering multiple fading blocks, the obtained lower bound on the instantaneous spectral efficiency $\bar R_k^{\sf ins.}$ is connected to the ergodic spectral efficiency as follows:
\begin{align} \label{eq:private_inequal}
R_k = \mathbb{E}_{\{\hat {\bf{h}}_k\}} [R_k^{\sf ins.}] \ge \mathbb{E}_{\{ \hat {\bf{h}}_k \}}[\bar R_k^{\sf ins.}].    
\end{align}
Combining \eqref{eq:ergodic_se_inequal_common} and \eqref{eq:private_inequal}, we finally complete the following lower bound on the ergodic sum spectral efficiency $R_{\Sigma}$:
\begin{align} \label{eq:sum_se_lb}
    R_{\Sigma} & \ge \bar R_{\Sigma} = 
    \mathbb{E}_{\{ \hat {\bf{h}}_{k \in \CMcal{K}}\}}\left[\min_{k \in \CMcal{K }} \{\bar R_{\sf c}^{\sf ins.}(k)\} + \sum_{k = 1}^{K} \bar R_{k}^{\sf ins.}  \right]. 
\end{align}
Now we are ready to formulate our main problem. 

\begin{remark} [Comparison to \cite{li:jsac:20}] \normalfont

Similar to our lower bound, \cite{li:jsac:20} also proposed a lower bound on the instantaneous spectral efficiencies by incorporating the CSIT estimation error. To be specific, \cite{li:jsac:20} assumed a particular scenario of CSIT estimation that ${\bf{h}}_k = \hat {\bf{h}}_k + \delta {\bf{e}}_k$, where $\mathbb{E}[{\bf{e}}_k^{\sf H} {\bf{e}}_k] = {\bf{I}}$ and $\mathbb{E}[{\bf{e}}_k] = 0$. We note that this is a special case of our CSIT estimation model. 
Under this premise, the instantaneous spectral efficiency of the common message achieved at user $k$, $R_{\sf c}^{\sf ins.}(k)$, is expressed as 
\begin{align} \label{eq:common_k_ins_cccp}
    R_{\sf c}^{\sf ins.}(k) &= \mathbb{E}_{\{{\bf{e}}_k\}}\left[\left. \log_2 \left(1 + \frac{{\bf{h}}_k^{\sf H} {\bf{Q}}_{\sf c} {\bf{h}}_k }{\sum_{\ell = 1}^{K} {\bf{h}}_k^{\sf H} {\bf{Q}}_{\ell} {\bf{h}}_k + \sigma^2/P}\right) \right| \hat{\bf{h}}_{k} \right] \nonumber \\
    &= \mathbb{E}_{\{{\bf{e}}_k\}} \left[ \log_2\left( {\bf{h}}_k^{\sf H} {\bf{Q}}_{\sf c} {\bf{h}}_k + \sum_{\ell = 1}^{K} {\bf{h}}_k^{\sf H} {\bf{Q}}_\ell {\bf{h}}_k + \sigma^2/P \right) \right] - \mathbb{E}_{\{{\bf{e}}_k\}} \left[ \log_2 \left( \sum_{\ell = 1}^{K} {\bf{h}}_k^{\sf H} {\bf{Q}}_\ell {\bf{h}}_k + \sigma^2/P \right) \right],
\end{align} 
where ${\bf{Q}}_{\sf c} = {\bf{f}}_{\sf c} {\bf{f}}_{\sf c}^{\sf H}$ and ${\bf{Q}}_k = {\bf{f}}_k {\bf{f}}_k^{\sf H}$. 
By using Jensen's inequality and the fact that the CSIT estimation error has zero-mean, we obtain an upper bound on the second term in \eqref{eq:common_k_ins_cccp} as 
\begin{align} \label{eq:cccp_lb_second}
    \mathbb{E}_{\{{\bf{e}}_k\}} \left[ \log_2 \left( \sum_{\ell = 1}^{K} {\bf{h}}_k^{\sf H} {\bf{Q}}_\ell {\bf{h}}_k + \sigma^2/P \right) \right] \le \log_2 \left(\sum_{\ell = 1}^{K} ( \hat {\bf{h}}_k^{\sf H} {\bf{Q}}_\ell \hat {\bf{h}}_k + \delta^2{\sf{tr}} ({\bf{Q}}_\ell)) + \sigma^2/P \right)
\end{align}
Additionally, \cite{li:jsac:20} proved that the first term in \eqref{eq:common_k_ins_cccp} is non-decreasing with $\delta$, so we get a lower bound by putting $\delta = 0$, yielding
\begin{align} \label{eq:cccp_lb_first}
    \mathbb{E}_{\{{\bf{e}}_k\}} \left[ \log_2\left( {\bf{h}}_k^{\sf H} {\bf{Q}}_{\sf c} {\bf{h}}_k + \sum_{\ell = 1}^{K} {\bf{h}}_k^{\sf H} {\bf{Q}}_\ell {\bf{h}}_k + \sigma^2/P \right) \right] \ge \log_2\left( \hat {\bf{h}}_k^{\sf H} {\bf{Q}}_{\sf c} \hat {\bf{h}}_k + \sum_{\ell = 1}^{K} \hat {\bf{h}}_k^{\sf H} {\bf{Q}}_\ell \hat {\bf{h}}_k + \sigma^2/P \right).
\end{align}
Combining \eqref{eq:cccp_lb_second} and \eqref{eq:cccp_lb_first}, \cite{li:jsac:20} claimed that we can make a lower bound as 
\begin{align} \label{eq:cccp_csit_lb}
    R_{\sf c}^{\sf ins.}(k) & \ge \log_2 \left(1 + \frac{ \hat {\bf{h}}_k^{\sf H} {\bf{Q}}_{\sf c} \hat {\bf{h}}_k }{\sum_{\ell = 1}^{K} (\hat {\bf{h}}_k^{\sf H} {\bf{Q}}_{\ell} \hat {\bf{h}}_k + \delta^2 {\sf{tr}}({\bf{Q}}_\ell) ) + \sigma^2/P}\right)= \tilde R_{\sf c}^{\sf ins.}(k). 
\end{align}
The lower bound \eqref{eq:cccp_csit_lb} is closely related to our lower bound. Rewriting \eqref{eq:cccp_csit_lb}, we have 
\begin{align} \label{eq:cccp_csit_lb_otherform}
    \tilde R_{\sf c}^{\sf ins.}(k) = \log_2 \left( 1 + \frac{|\hat {\bf{h}}_{k}^{\sf H} {\bf{f}}_{\sf c}|^2}{\sum_{\ell = 1}^{K} |\hat {\bf{h}}_{k}^{\sf H} {\bf{f}}_\ell|^2 + \sum_{\ell = 1}^{K} {\bf{f}}_{\ell}^{\sf H} \cdot \delta^2 {\bf{I}}  \cdot {\bf{f}}_{\ell}  +  \sigma^2/P }\right).
\end{align}
Comparing $\tilde R_{\sf c}^{\sf ins.}(k)$ in \eqref{eq:cccp_csit_lb_otherform} to our lower bound $\bar R_{\sf c}^{\sf ins.}(k)$ in \eqref{eq:common} under the assumption that ${\bf{\Phi}}_k = \delta^2 {\bf{I}}$, \eqref{eq:cccp_csit_lb_otherform} seems to be tighter since the denominator of the SINR in \eqref{eq:cccp_csit_lb_otherform} does not include the term ${\bf{f}}_{\sf c}^{\sf H} \cdot \delta^2 {\bf{I}} \cdot {\bf{f}}_{\sf c}$. Nonetheless, the lower bound \eqref{eq:cccp_csit_lb_otherform} is limited in its applicability. This is because, the lower bound technique to derive \eqref{eq:cccp_csit_lb_otherform} cannot be applied when ${\bf{\Phi}}_k \neq \delta^2{\bf{I}}$, i.e., the CSIT error is spatially correlated. Since it is usual that MIMO channels have particular spatial correlation structures, our lower bound is more proper to use in general cases.

\end{remark}

\subsection{Problem Formulation}

We aim to maximize $\bar R_{\Sigma}$ in \eqref{eq:sum_se_lb}. 
In our setup, maximizing $\bar R_{\Sigma}$ is equivalent to maximizing 
$\min_{k \in \CMcal{K }} \{\bar R_{\sf c}^{\sf ins.}(k)\} + \sum_{k = 1}^{K} \bar R_{k}^{\sf ins.}  $ per each fading block, wherein the BS is able to calculate $\bar R_k^{\sf ins.}$ and $\min_{k \in \CMcal{K }} \{\bar R_{\sf c}^{\sf ins.}(k)\}$ 
in a closed-form by using the estimated CSIT. 
Accordingly, we formulate an optimization problem as follows:
\begin{align} \label{eq:lb_problem}
\mathop{{\text{maximize}}}_{{\bf{f}}_{\sf c}, {\bf{f}}_1, \cdots,{\bf{f}}_K}& \;\; \min_{k \in \CMcal{K }} \{\bar R_{\sf c}^{\sf ins.}(k)\} + \sum_{k = 1}^{K} \bar R_{k}^{\sf ins.} \\
{\text{subject to}} & \;\;  \left\| {\bf{f}}_{\sf c} \right\|^2 + \sum_{k = 1}^{K} \left\| {\bf{f}}_k \right\|^2  \le 1 .\label{eq:lb_problem_constraint}
\end{align}
We tackle \eqref{eq:lb_problem} as our main problem. 
Finding the global solution of \eqref{eq:lb_problem} is infeasible due to its non-convexity and non-smoothness. 

\section{Existing Approach: WMMSE} \label{sec:preliminary}
In this section, we briefly introduce the existing WMMSE approach \cite{joudeh:16:tcom} for the sum spectral efficiency maximization in an RSMA scenario. 
We focus on two points: {\it{(i)}} how to solve a non-convex optimization problem and {\it{(ii)}} how to incorporate the CSIT estimation error. Then we explain the main distinguishable points of the proposed method. 

We first explain how to relax a non-convex problem. 
To solve a non-convex sum spectral efficiency maximization problem, \cite{joudeh:16:tcom} adopted a well-known WMMSE relaxation technique \cite{chris:twc:08}. Specifically, we denote that $\hat s_{\sf c}(k)$, $\hat s_k$ are estimates for $s_{\sf c}(k)$ and $s_k$, where $s_{\sf c}(k)$ is the common message received at user $k$. With scalar equalizers $g_{\sf c}(k)$ and $g_k$, we define the MSEs of the common message received at user $k$ ($\epsilon_{\sf c}(k)$) and the private message ($\epsilon_k$) as
\begin{align}
    & \epsilon_{\sf c}(k) = \mathbb{E} [| \hat s_{\sf c}(k) - s_{\sf c}|^2] = \mathbb{E} [|g_{\sf c}(k)y_k - s_{\sf c}|^2] \nonumber \\
     & \;\;\;\;\;\;\;\; = |g_{\sf c}(k)|^2 T_{\sf c}(k) - 2{\sf{Re}} \left\{ 
    g_{\sf c}(k) {\bf{h}}_k^{\sf H} {\bf{f}}_{\sf c}\right\} + 1, \\
    &\epsilon_k = |g_k|^2 T_k - 2 {\sf{Re}} \left\{ g_k {\bf{h}}_k^{\sf H} {\bf{f}}_k \right\} + 1,
\end{align}
where $T_{\sf c}(k) = |{\bf{h}}_{k}^{\sf H} {\bf{f}}_{\sf c}|^2 + \sum_{\ell = 1}^{K} |{\bf{h}}_k^{\sf H} {\bf{f}}_\ell|^2 + \sigma^2$ and $T_k = |{\bf{h}}_k^{\sf H} {\bf{f}}_k|^2 + \sum_{\ell = 1, \ell \neq k}^{K} |{\bf{h}}_k^{\sf H} {\bf{f}}_\ell|^2 + \sigma^2$.
Note that here we exchange the power assumption between the message and the precoding vectors, i.e., $\mathbb{E}[|s_{\sf c}|^2] = \mathbb{E}[|s_{k}|^2] = 1$ and $\| {\bf{f}}_{\sf c}\|^2 + \sum_{k = 1}^{K} \| {\bf{f}}_k\|^2 \le P$ for ease of description. This does not change the SINR. 
The minimum MSEs are achieved when $g_{\sf c}(k) = {\bf{f}}_{\sf c}^{\sf H} {\bf{h}}_k T_{\sf c}(k)^{-1}$ and $g_k = {\bf{f}}_k^{\sf H} {\bf{h}}_k T_k^{-1}$, providing the following MMSE: $\epsilon_{\sf c}^{\sf MMSE}(k) = T_{\sf c}(k)^{-1} (T_{\sf c}(k) - |{\bf{h}}_k^{\sf H} {\bf{f}}_{\sf c}|^2)$ and $\epsilon_k^{\sf MMSE} = T_k^{-1} (T_k - |{\bf{h}}_k^{\sf H} {\bf{f}}_k|^2)$.

Then the augmented WMSEs are give by
\begin{align}
    & \xi_{\sf c}(k) = u_{\sf c}(k) \epsilon_{\sf c}(k) - \log_2 (u_{\sf c}(k)) \nonumber \\
    & \;\;\;\;\;\;\;\; = {\bf{f}}_{\sf c}^{\sf H} \left(u_{\sf c}(k) |g_{\sf c}(k)|^2 {\bf{h}}_k {\bf{h}}_k^{\sf H} \right) {\bf{f}}_{\sf c} + \sum_{\ell = 1}^{K} {\bf{f}}_\ell^{\sf H} \left(u_{\sf c}(k) |g_{\sf c}(k)|^2 {\bf{h}}_k {\bf{h}}_k^{\sf H} \right) {\bf{f}}_\ell - 2{\sf{Re}}\left\{ u_{\sf c}(k)  g_{\sf c}(k) {\bf{h}}_k^{\sf H} {\bf{f}}_{\sf c}\right\} \nonumber \\
    & \;\;\;\;\;\;\;\;\;\;\; + \sigma^2 u_{\sf c}(k) |g_{\sf c}(k)|^2 + u_{\sf c}(k) - \log_2 (u_{\sf c}(k)),\\
& \xi_k = u_k \epsilon_k - \log_2 (u_k) \nonumber \\
    & \;\;\;\; = {\bf{f}}_k^{\sf H} \left( u_k |g_k|^2 {\bf{h}}_k {\bf{h}}_k^{\sf H} \right) {\bf{f}}_k + \sum_{\ell = 1, \ell \neq k}^{K} {\bf{f}}_\ell^{\sf H} \left(u_k |g_k|^2 {\bf{h}}_k {\bf{h}}_k^{\sf H} \right) {\bf{f}}_\ell - 2{\sf{Re}}\left\{ u_k g_k {\bf{h}}_k^{\sf H} {\bf{f}}_k \right\}  \nonumber \\
    & \;\;\;\;\;\;\; + \sigma^2 u_k |g_k|^2 + u_k - \log_2(u_k).
\end{align}
We note that the optimal weights to achieve the minimum of $ \xi_{\sf c}(k)$ and $ \xi_k$ are obtained as $u_{\sf c}(k) = 1/\epsilon_{\sf c}^{\sf MMSE}(k)$ and $u_k = 1/\epsilon_k^{\sf MMSE}$. 
Upon this weight update, by the rate-WMMSE equivalence \cite{chris:twc:08}, the sum spectral efficiency is maximized by solving the following WMSE minimization problem:
\begin{align} \label{eq:wmmse}
\mathop{{\text{minimize}}}_{{\bf{f}}_{\sf c}, {\bf{f}}_1, \cdots,{\bf{f}}_K, \xi_{\sf c}}& \;\; 
\xi_{\sf c} + \sum_{k = 1}^{K} \xi_k
\\
{\text{subject to}} & \;\;  \xi_{\sf c}(k) \le \xi_{\sf c}, \; \forall k \in \CMcal{K}, \label{eq:wmmse_constraint}\\
& \| {\bf{f}}_{\sf c}\|^2 + \sum_{k = 1}^{K} \| {\bf{f}}_k \|^2 \le P.
\end{align}
The problem \eqref{eq:wmmse} is QCQP, which can be solved by using CVX. We compute the optimal weight, equalizer, and the precoding vector in an alternating fashion. We repeat this process until a certain termination criterion is met. 

The presented WMMSE approach assumes the perfect CSIT. To take the CSIT estimation error into account, \cite{joudeh:16:tcom} adopted the sample average approximation (SAA) technique. In the SAA technique, we produce $M$ number of samples of the augmented WMSEs by randomly generating channel vector ${\bf{h}} = \hat {\bf{h}}_k + {\bf{e}}_k$ ($\hat {\bf{h}}_k$ is given, ${\bf{e}}_k$ is randomly generated) then calculate an empirical average of these samples as follows: 
\begin{align}
    & \bar \xi_{\sf c}(k) \leftarrow \frac{1}{M} \sum_{m = 1}^{M} \xi_{\sf c}^{(m)}(k), \\
    & \bar \xi_{k} \leftarrow \frac{1}{M} \sum_{m = 1}^{M} \xi_{\sf c}^{(m)}(k).
\end{align}
We replace $\xi_{\sf c}(k)$ and $\xi_k$ by the empirical average augmented WMSE $\bar \xi_{\sf c}(k)$ and $\bar \xi_k$ in \eqref{eq:wmmse}. 

Now we clarify the distinguishable points of our method compared to the WMMSE approach. In the WMMSE approach, CVX is required to solve the WMSE minimization problem \eqref{eq:wmmse}. We need additional efforts to implement CVX in FPGA hardware since CVX is not designed to run in real-time hardware \cite{krivochiza:access:19}. In addition to this, we observe that the common message rate is controlled by the WMSE constraints \eqref{eq:wmmse_constraint}, where the number of the constraints is equal to the number of users $K$. For this reason, the associated computational complexity of the WMMSE approach scales with $K^{3.5}$ {\color{black}{\cite{patil:tcom:18, joudeh:16:tsp}}}; resulting in the huge computational complexity is caused when there are a large number of users. 
Compared to this, as we will show later, our method does not require CVX to obtain a solution. Further, since we control the common message rate by cleverly approximating the minimum function, the computational complexity scales with $K$. For this reason, our method is much more beneficial when there are many users.


\section{Precoder Optimization with Generalized Power Iteration} \label{sec:main}

In this section, we explain the key ideas to solve the optimization problem \eqref{eq:lb_problem}. We first approximate the non-smooth minimum function as a smooth function using the LogSumExp technique. Subsequently, we represent the optimization variable onto a higher dimensional vector to reformulate the problem \eqref{eq:lb_problem} into a tractable non-convex optimization problem expressed as a function of Rayleigh quotients. 
By deriving the first-order KKT condition for the reformulated problem, we show that the first-order optimality condition is cast as an eigenvector-dependent nonlinear eigenvalue problem (NEPv) \cite{cai:siam:18}, and finding the leading eigenvector is equivalent to finding the best local optimal point of the reformulated problem. Consequently, to find the leading eigenvector, we propose a computationally efficient generalized power iteration algorithm. 

\subsection{Reformulation to a Tractable Form}
At first, we approximate the non-smooth minimum function by using the LogSumExp technique. With the LogSumExp, the minimum function is approximated as \cite{shen2010dual}
\begin{align}
    \label{eq:logsumexp}
    \min_{i = 1,...,N}\{x_i\}  \approx -\alpha \log\left(\frac{1}{N}  \sum_{i = 1}^{N} \exp\left( \frac{x_i}{-\alpha}  \right)\right),
\end{align}
where the approximation becomes tight as $\alpha \rightarrow +0$. Leveraging \eqref{eq:logsumexp}, we approximate  
\begin{align} \label{eq:approximation_R}
\min_{k \in \CMcal{K }} \{\bar R_{\sf c}^{\sf ins.}(k)\} \approx  -\alpha \log \left(\frac{1}{K} \sum_{k = 1}^{K} \exp\left( \frac{\bar R_{\sf c}^{\sf ins.}(k)}{-\alpha}\right) \right).
\end{align}
To help understand the LogSumExp approximation technique, we draw an illustration in Fig.~\ref{fig:logsumexp_verify}. 
In Fig.~\ref{fig:logsumexp_verify}, assuming the $N = 1$, $K = 2$, a landscape of the minimum spectral efficiency between two users is depicted. In addition to that, the approximate minimum spectral efficiency using the LogSumExp technique is also presented. As shown in the figure, the true maximum value of the non-smooth minimum function is tightly approximated by the LogSumExp technique. 
\begin{figure}[!t]
	\centerline{\resizebox{0.45\columnwidth}{!}{\includegraphics{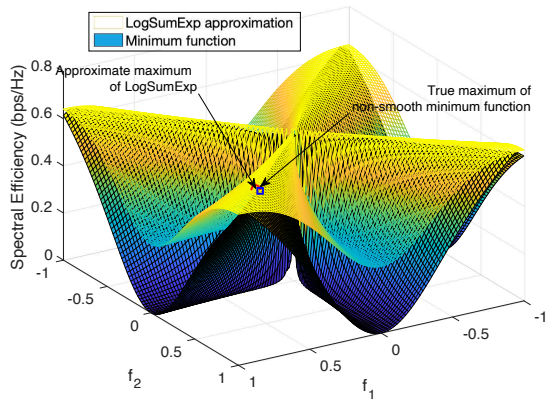}}}     
	\caption{An illustration of the comparison between the approximate maximum using the LogSumExp and the true maximum of the non-smooth minimum function.}
 	\label{fig:logsumexp_verify}
\end{figure}

Now we rewrite the precoding vectors ${\bf{f}}_{\sf c}, {\bf{f}}_1, \cdots, {\bf{f}}_K$ in a higher dimensional vector $\bar {\bf{f}}$ by stacking each vector as 
$\bar {\bf{f}} = [{\bf{f}}_{\sf c}^{\sf T}, {\bf{f}}_1^{\sf T}, \cdots, {\bf{f}}_K^{\sf T}]^{\sf T} \in \mathbb{C}^{N(K+1) \times 1}$.  
With this, we express the instantaneous spectral efficiencies regarding the common message $s_{\sf c}$ as
\begin{align} \label{eq:rewrite_commonrate}
\bar R_{\sf c}^{\sf ins.}(k) = \log_2 \left( \frac{\bar {\bf{f}}^{\sf H} {\bf{A}}_{\sf c} (k) \bar {\bf{f}}}{\bar {\bf{f}}^{\sf H} {\bf{B}}_{\sf c}(k)  \bar {\bf{f}} } \right),
\end{align}
where 
\begin{align}
&{\bf{A}}_{\sf c}(k) =  {\rm blkdiag} \left((\hat {\bf{h}}_k \hat {\bf{h}}_k^{\sf H} + {\bf{\Phi}}_k) , \cdots , (\hat {\bf{h}}_k \hat {\bf{h}}_k^{\sf H} + {\bf{\Phi}}_k) \right) + {\bf{I}}_{N(K+1)} \frac{ \sigma^2}{P}, \\
& {\bf{B}}_{\sf c}(k) = {\bf{A}}_{\sf c}(k) - {\rm blkdiag} \left( \hat {\bf{h}}_k \hat {\bf{h}}_k^{\sf H}, {\bf{0}}, \cdots, {\bf{0}} \right) .
\end{align}
%
Note that we implicitly assume $\|\bar {\bf{f}}\|^2 = 1$ to have \eqref{eq:rewrite_commonrate}. This assumption does not hurt the optimality since the spectral efficiency is monotonically increasing with the transmit power. It is also worthwhile to mention that \eqref{eq:rewrite_commonrate} is presented as a function of Rayleigh quotient terms. Similar to this, we also write the spectral efficiency of the private message $s_k$ as 
\begin{align}
\bar R_{k}^{\sf ins.} = \log_2\left( \frac{\bar {\bf{f}}^{\sf H} {\bf{A}}_k \bar {\bf{f}}}{\bar {\bf{f}}^{\sf H} {\bf{B}}_k \bar {\bf{f}}}   \right),
\end{align}
where 
\begin{align}
&{\bf{A}}_k =  {\rm blkdiag} \left({\bf{0}}, (\hat {\bf{h}}_k \hat {\bf{h}}_k^{\sf H} + {\bf{\Phi}}_k) , \cdots, (\hat {\bf{h}}_k \hat {\bf{h}}_k^{\sf H} + {\bf{\Phi}}_k) \right)+ {\bf{I}}_{N(K+1)} \frac{ \sigma^2}{P}, \\
& {\bf{B}}_{k} =  {\bf{A}}_{k} - {\rm blkdiag} \left({\bf{0}}, \cdots, {\bf{0}},  \underbrace{\hat {\bf{h}}_k \hat {\bf{h}}_k^{\sf H}}_{{\text{the} \;(k+1){\text{th block}}}}, {\bf{0}}, \cdots, {\bf{0}} \right).
\end{align}
With this Rayleigh quotients representation, the problem \eqref{eq:lb_problem} is transformed to 
\begin{align} \label{eq:problem_new}
\mathop{{\text{maximize}}}_{\bar {\bf{f}}} \;\;  
 &\log \left(\frac{1}{K}  \sum_{k = 1}^{K}  \exp \left(\log_2 \left( \frac{\bar {\bf{f}}^{\sf H} {\bf{A}}_{\sf c} (k) \bar {\bf{f}}}{\bar {\bf{f}}^{\sf H} {\bf{B}}_{\sf c}(k)  \bar {\bf{f}} }  \right)^{-\frac{1}{\alpha}} \right)  \right)^{-\alpha}  + \sum_{k = 1}^{K}  \log_2 \left( \frac{\bar {\bf{f}}^{\sf H} {\bf{A}}_k \bar {\bf{f}}}{\bar {\bf{f}}^{\sf H} {\bf{B}}_k \bar {\bf{f}}}   \right) \\
\label{eq:constraint_new}
{\text{subject to}} & \;\; \left\| \bar {\bf{f}} \right\|^2 = 1.
\end{align}
We note that in \eqref{eq:problem_new}, the obtained precoding vector $\bar {\bf{f}}$ can always be normalized by dividing the numerator and the denominator of each Rayleigh quotient with $\| \bar {\bf{f}} \|$, while not affecting the objective function. Thanks to this feature, the constraint $\| {\bf{f}} \|^2 = 1$ can vanish from \eqref{eq:constraint_new}. Now we are ready to tackle the problem \eqref{eq:problem_new}. 

\subsection{First-Order Optimality Condition}
To approach the solution of the transformed problem \eqref{eq:problem_new}, we derive a first-order optimality condition of \eqref{eq:problem_new}. The following lemma shows the main result in this subsection. 
 
  \begin{lemma} \label{lem:main}
 The first-order optimality condition of the optimization problem \eqref{eq:problem_new} is satisfied if the following holds:
\begin{align} \label{eq:lem_kkt}
{\bf{B}}_{\sf KKT}^{-1} (\bar {\bf{f}}){\bf{A}}_{\sf KKT} (\bar {\bf{f}}) \bar {\bf{f}} = \lambda (\bar {\bf{f}})  \bar {\bf{f}},
\end{align} 
where 
\begin{align} \label{eq:lem_A_kkt}
    &{\bf{A}}_{\sf KKT}(\bar {\bf{f}}) =  \lambda_{\sf num} (\bar {\bf{f}}) \times \sum_{k = 1}^{K}  \left[ \frac{\exp\left( \frac{1}{-\alpha}  \log_2\left(\frac{\bar {\bf{f}}^{\sf H} {\bf{A}}_{\sf c}(k) \bar {\bf{f}}}{\bar {\bf{f}}^{\sf H} {\bf{B}}_{\sf c}(k) \bar {\bf{f}} } \right) \right)}{\sum_{\ell = 1}^{K} \exp\left(\frac{1}{-\alpha} \log_2\left(\frac{\bar {\bf{f}}^{\sf H} {\bf{A}}_{\sf c}(\ell) \bar {\bf{f}}}{\bar {\bf{f}}^{\sf H} {\bf{B}}_{\sf c}(\ell) \bar {\bf{f}} } \right)\right)} \frac{{\bf{A}}_{\sf c}(k)}{\bar {\bf{f}}^{\sf H} {\bf{A}}_{\sf c}(k) \bar {\bf{f}}} + \frac{{\bf{A}}_k}{\bar {\bf{f}}^{\sf H} {\bf{A}}_k \bar {\bf{f}}} \right] , \\
    &{\bf{B}}_{\sf KKT}(\bar {\bf{f}}) = \lambda_{\sf den} (\bar {\bf{f}}) \times  \sum_{k = 1}^{K}  \left[ \frac{\exp\left( \frac{1}{-\alpha}  \log_2\left(\frac{\bar {\bf{f}}^{\sf H} {\bf{A}}_{\sf c}(k) \bar {\bf{f}}}{\bar {\bf{f}}^{\sf H} {\bf{B}}_{\sf c}(k) \bar {\bf{f}} } \right) \right)}{\sum_{\ell = 1}^{K} \exp\left(\frac{1}{-\alpha} \log_2\left(\frac{\bar {\bf{f}}^{\sf H} {\bf{A}}_{\sf c}(\ell) \bar {\bf{f}}}{\bar {\bf{f}}^{\sf H} {\bf{B}}_{\sf c}(\ell) \bar {\bf{f}} } \right)\right)} \frac{{\bf{B}}_{\sf c}(k)}{\bar {\bf{f}}^{\sf H} {\bf{B}}_{\sf c}(k) \bar {\bf{f}}} + \frac{{\bf{B}}_k}{\bar {\bf{f}}^{\sf H} {\bf{B}}_k \bar {\bf{f}}} \right] , \label{eq:lem_B_kkt}
\end{align}
with
\begin{align} \label{eq:lem_lambda}
    \lambda(\bar {\bf{f}}) =& \left\{\frac{1}{K  } \sum_{k = 1}^{K} \exp\left( \log_2
    \left(\frac{\bar {\bf{f}}^{\sf H} {\bf{A}}_{\sf c}(k) \bar {\bf{f}}}{\bar {\bf{f}}^{\sf H} {\bf{B}}_{\sf c}(k) {\bf{f}}} \right)^{-\frac{1}{\alpha }} \right)\right\}^{-\frac{\alpha}{\log_2 e}}   \times  \prod_{k = 1}^{K} \left(\frac{\bar {\bf{f}}^{\sf H} {\bf{A}}_k \bar {\bf{f}}}{\bar {\bf{f}}^{\sf H} {\bf{B}}_k \bar {\bf{f}}} \right) = \frac{\lambda_{\sf num} (\bar {\bf{f}})}{\lambda_{\sf den} (\bar {\bf{f}})}.
\end{align}
\end{lemma}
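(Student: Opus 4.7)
The plan is to treat \eqref{eq:problem_new} as an \emph{unconstrained} problem in $\bar{\bf{f}}$, since every ratio $\bar{\bf{f}}^{\sf H}{\bf{A}}\bar{\bf{f}}/\bar{\bf{f}}^{\sf H}{\bf{B}}\bar{\bf{f}}$ appearing in the objective is invariant under $\bar{\bf{f}}\mapsto c\bar{\bf{f}}$, so the constraint $\|\bar{\bf{f}}\|^{2}=1$ is redundant at stationarity (the radial component of the gradient is automatically zero by scale-invariance, leaving the Lagrange multiplier identically $0$). I would then compute the Wirtinger derivative with respect to $\bar{\bf{f}}^{*}$ term-by-term. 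The workhorse identity is
\begin{align*}
\frac{\partial}{\partial \bar{\bf{f}}^{*}}\log_{2}\!\left(\frac{\bar{\bf{f}}^{\sf H}{\bf{A}}\bar{\bf{f}}}{\bar{\bf{f}}^{\sf H}{\bf{B}}\bar{\bf{f}}}\right)=\frac{1}{\ln 2}\!\left(\frac{{\bf{A}}\bar{\bf{f}}}{\bar{\bf{f}}^{\sf H}{\bf{A}}\bar{\bf{f}}}-\frac{{\bf{B}}\bar{\bf{f}}}{\bar{\bf{f}}^{\sf H}{\bf{B}}\bar{\bf{f}}}\right),
\end{align*}
which handles each private-message summand directly. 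For the LogSumExp common-message block, applying the chain rule through the outer $-\alpha\ln(\cdot)$ and the inner exponential produces a convex combination of the same log-quotient derivatives, weighted by the softmin coefficients
\begin{align*}
w_{k}(\bar{\bf{f}})=\frac{\exp\!\left(-\tfrac{1}{\alpha}\log_{2}(\bar{\bf{f}}^{\sf H}{\bf{A}}_{\sf c}(k)\bar{\bf{f}}/\bar{\bf{f}}^{\sf H}{\bf{B}}_{\sf c}(k)\bar{\bf{f}})\right)}{\sum_{\ell=1}^{K}\exp\!\left(-\tfrac{1}{\alpha}\log_{2}(\bar{\bf{f}}^{\sf H}{\bf{A}}_{\sf c}(\ell)\bar{\bf{f}}/\bar{\bf{f}}^{\sf H}{\bf{B}}_{\sf c}(\ell)\bar{\bf{f}})\right)}.
\end{align*}

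Setting the total gradient to zero and moving the ${\bf{A}}$-type contributions to one side and the ${\bf{B}}$-type contributions to the other (the factor $1/\ln 2$ cancels) yields the raw stationarity equation $\widetilde{\bf{M}}_{A}(\bar{\bf{f}})\bar{\bf{f}}=\widetilde{\bf{M}}_{B}(\bar{\bf{f}})\bar{\bf{f}}$, where $\widetilde{\bf{M}}_{A}=\sum_{k}[w_{k}{\bf{A}}_{\sf c}(k)/(\bar{\bf{f}}^{\sf H}{\bf{A}}_{\sf c}(k)\bar{\bf{f}})+{\bf{A}}_{k}/(\bar{\bf{f}}^{\sf H}{\bf{A}}_{k}\bar{\bf{f}})]$ and $\widetilde{\bf{M}}_{B}$ is the analogous sum with $({\bf{B}}_{\sf c}(k),{\bf{B}}_{k})$ in place of $({\bf{A}}_{\sf c}(k),{\bf{A}}_{k})$. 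Multiplying both sides by arbitrary positive scalars preserves the equation, so I can rescale $\widetilde{\bf{M}}_{A}$ by $\lambda_{\sf num}(\bar{\bf{f}})$ and $\widetilde{\bf{M}}_{B}$ by $\lambda_{\sf den}(\bar{\bf{f}})$ to obtain precisely \eqref{eq:lem_A_kkt}--\eqref{eq:lem_B_kkt}. Since every summand in ${\bf{B}}_{\sf KKT}(\bar{\bf{f}})$ is positive definite (each ${\bf{B}}_{\sf c}(k)$ and ${\bf{B}}_{k}$ contains the $\frac{\sigma^{2}}{P}{\bf{I}}$ ridge), left-multiplication by ${\bf{B}}_{\sf KKT}^{-1}(\bar{\bf{f}})$ produces \eqref{eq:lem_kkt} with eigenvalue $\lambda(\bar{\bf{f}})=\lambda_{\sf num}(\bar{\bf{f}})/\lambda_{\sf den}(\bar{\bf{f}})$.

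The main obstacle is not establishing the stationarity equation itself but verifying that this ratio collapses to the explicit expression in \eqref{eq:lem_lambda}, and more importantly explaining \emph{why} the authors commit to this particular split between numerator and denominator scalars. The reason becomes transparent once one takes $\log_{2}$ of \eqref{eq:lem_lambda} and uses $\log_{2}e=1/\ln 2$: a short bookkeeping calculation shows that $\log_{2}\lambda(\bar{\bf{f}})$ reproduces exactly the objective of \eqref{eq:problem_new} evaluated at $\bar{\bf{f}}$. This identification is the crux of the lemma's role in the paper, since it means that stationary points of \eqref{eq:problem_new} are precisely fixed points of the map $\bar{\bf{f}}\mapsto {\bf{B}}_{\sf KKT}^{-1}(\bar{\bf{f}}){\bf{A}}_{\sf KKT}(\bar{\bf{f}})\bar{\bf{f}}$ and that the \emph{leading} eigenvector corresponds to the best local optimum, which is what legitimizes the generalized power iteration developed in the next subsection.
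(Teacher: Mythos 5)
Your proposal is correct and follows essentially the same route as the paper's own proof in Appendix A: differentiate the (effectively unconstrained, by scale invariance) objective term by term, collect the ${\bf A}$-type and ${\bf B}$-type contributions of the resulting stationarity equation on opposite sides, and rescale by $\lambda_{\sf num}$ and $\lambda_{\sf den}$ to arrive at ${\bf A}_{\sf KKT}\bar{\bf f} = \lambda\,{\bf B}_{\sf KKT}\bar{\bf f}$ and hence \eqref{eq:lem_kkt}. Your two additions --- verifying that ${\bf B}_{\sf KKT}$ is invertible via the $\frac{\sigma^2}{P}{\bf I}$ ridge, and checking that $\log_2 \lambda(\bar{\bf f})$ equals the objective of \eqref{eq:problem_new} --- are points the paper leaves implicit in the proof (the latter is only asserted in the main-text discussion), so they strengthen rather than deviate from the argument.
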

\begin{proof}
See Appendix \ref{proof:lem1}.
\end{proof}

Now we interpret the derived optimality condition \eqref{eq:lem_kkt}. We first observe if a precoding vector $\bar {\bf{f}}$ satisfies the condition \eqref{eq:lem_kkt}, then it also satisfies the first-order optimality condition, which means that the corresponding $\bar {\bf{f}}$ is a stationary point of the problem \eqref{eq:problem_new} whose gradient is zero. If the problem \eqref{eq:problem_new} has multiple stationary points, it is possible to exist multiple $\bar {\bf{f}}$ satisfying \eqref{eq:lem_kkt}. 
Next, we see that \eqref{eq:lem_kkt} is presented as a form of the eigenvector problem for the matrix ${\bf{B}}_{\sf KKT}^{-1} (\bar {\bf{f}}^{}){\bf{A}}_{\sf KKT}(\bar {\bf{f}}^{})$. More rigorously, \eqref{eq:lem_kkt} is cast as a NEPv \cite{cai:siam:18}.
As described in \cite{cai:siam:18}, NEPv is a generalized version of an eigenvalue problem, in that a matrix can be changed depending on an eigenvector in a nonlinear fashion. 
In our case, the matrix ${\bf{B}}_{\sf KKT}^{-1} (\bar {\bf{f}}^{}){\bf{A}}_{\sf KKT}(\bar {\bf{f}}^{})$ is a nonlinear function of the eigenvector $\bar {\bf{f}}$. 
Crucially, in the formulated NEPv \eqref{eq:lem_kkt}, the eigenvalue $\lambda(\bar {\bf{f}})$ is equivalent to the objective function of the problem \eqref{eq:problem_new}. Accordingly, if we find the leading eigenvector of the NEPv \eqref{eq:lem_kkt}, then it maximizes the objective function among multiple eigenvectors. 
Eventually, since \eqref{eq:lem_kkt} holds for any eigenvector, finding the leading eigenvector of the NEPv \eqref{eq:lem_kkt} is equivalent to finding the local optimal point that maximizes the objective function of \eqref{eq:problem_new} and has zero gradient. This leads to the following proposition.

\begin{proposition} \label{prop:optimal}
Denoting that the local optimal point for the problem \eqref{eq:problem_new} as $\bar {\bf{f}}^{\star }$, $\bar {\bf{f}}^{\star}$ is the leading eigenvector of ${\bf{B}}_{\sf KKT}^{-1} (\bar {\bf{f}}^{\star}){\bf{A}}_{\sf KKT}(\bar {\bf{f}}^{\star})$ satisfying
\begin{align}
    \label{eq:KKT}
 {\bf{B}}_{\sf KKT}^{-1}(\bar {\bf{f}}^{\star}){\bf{A}}_{\sf KKT}(\bar {\bf{f}}^{\star}) \bar {\bf{f}}^{\star} = \lambda^{\star} \bar {\bf{f}}^{\star},  
\end{align}
where $\lambda^{\star}$ is the corresponding eigenvalue. 
\end{proposition}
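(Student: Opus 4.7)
The plan is to leverage Lemma~\ref{lem:main} and combine it with the observation that the quantity $\lambda(\bar{\bf f})$ appearing in the NEPv \eqref{eq:lem_kkt} is, up to a strictly monotone transformation, the objective of problem \eqref{eq:problem_new} itself evaluated at $\bar{\bf f}$. Since any local optimum of \eqref{eq:problem_new} is necessarily a stationary point, Lemma~\ref{lem:main} immediately delivers that $\bar{\bf f}^{\star}$ satisfies ${\bf B}_{\sf KKT}^{-1}(\bar{\bf f}^{\star}){\bf A}_{\sf KKT}(\bar{\bf f}^{\star})\bar{\bf f}^{\star}=\lambda^{\star}\bar{\bf f}^{\star}$, so $\bar{\bf f}^{\star}$ is \emph{some} eigenvector of the matrix ${\bf B}_{\sf KKT}^{-1}(\bar{\bf f}^{\star}){\bf A}_{\sf KKT}(\bar{\bf f}^{\star})$. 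The whole burden of the proposition therefore reduces to identifying it as the \emph{leading} one.

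The central step is to verify that the expression for $\lambda(\bar{\bf f})$ in \eqref{eq:lem_lambda} is, up to a strictly increasing transformation, exactly the value of the objective in \eqref{eq:problem_new} at $\bar{\bf f}$. I would do this by taking $\log_2$ of \eqref{eq:lem_lambda}: the logarithm of the first factor, after absorbing the exponent $-\alpha/\log_2 e$, reproduces the LogSumExp approximation $-\alpha\log\bigl(\frac{1}{K}\sum_{k}\exp(\bar R_{\sf c}^{\sf ins.}(k)/(-\alpha))\bigr)$ used in \eqref{eq:approximation_R}, while the logarithm of the product $\prod_{k}(\bar{\bf f}^{\sf H}{\bf A}_k\bar{\bf f}/\bar{\bf f}^{\sf H}{\bf B}_k\bar{\bf f})$ yields $\sum_{k}\bar R_k^{\sf ins.}$ (modulo base conversion). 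Summing the two parts recovers the objective of \eqref{eq:problem_new}, so $\lambda(\cdot)$ is a strictly monotone function of that objective on the unit sphere.

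With this monotone correspondence in place, I would close the argument along the lines of the NEPv variational principle in \cite{cai:siam:18}: the set of stationary points of \eqref{eq:problem_new} is mapped bijectively onto the set of (eigenvector, eigenvalue) pairs of ${\bf B}_{\sf KKT}^{-1}(\bar{\bf f}){\bf A}_{\sf KKT}(\bar{\bf f})$ via Lemma~\ref{lem:main}, and the objective value attained at each stationary point equals a monotone function of the corresponding eigenvalue. Hence maximizing the objective over the stationary points is equivalent to selecting the eigenvector with the largest eigenvalue of ${\bf B}_{\sf KKT}^{-1}(\bar{\bf f}){\bf A}_{\sf KKT}(\bar{\bf f})$; since by assumption $\bar{\bf f}^{\star}$ is (locally) optimal, it must correspond to this largest eigenvalue, proving the claim.

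The main obstacle I anticipate is purely algebraic: carefully matching the prefactor of \eqref{eq:lem_lambda} with the LogSumExp term in \eqref{eq:problem_new}, so that the monotone correspondence between $\lambda(\bar{\bf f})$ and the objective is transparent. This requires tracking the constants introduced by differentiating $\log_2(\cdot)$, the exponent $-1/\alpha$ from the LogSumExp surrogate, and the outer $-\alpha$ factor in \eqref{eq:approximation_R}. Once these identities are verified, the rest of the proof is essentially a bookkeeping exercise pairing ``stationary points $\leftrightarrow$ eigenvectors'' (from Lemma~\ref{lem:main}) with ``objective value $\leftrightarrow$ eigenvalue'' (from the monotone correspondence) and then picking the top of the spectrum.
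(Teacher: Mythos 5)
Your proposal is correct and takes essentially the same route as the paper: the paper also proves Proposition~\ref{prop:optimal} by combining Lemma~\ref{lem:main} (stationary points of \eqref{eq:problem_new} are eigenvectors of the NEPv) with the observation that the eigenvalue in \eqref{eq:lem_lambda} satisfies $\log_2 \lambda(\bar {\bf{f}}) = $ objective of \eqref{eq:problem_new}, so that among all stationary points the one maximizing the objective is precisely the leading eigenvector. Your explicit check of the monotone correspondence (absorbing the $-\alpha/\log_2 e$ exponent to recover the LogSumExp term plus $\sum_{k} \bar R_{k}^{\sf ins.}$) is exactly the bookkeeping the paper states without detail.
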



Finding $\bar {\bf{f}}^{\star}$ is, however, not straightforward due to the intertwined nature of the problem. 
In the next subsection, we propose a novel method called GPI-RS. GPI-RS is able to obtain the leading eigenvector of the matrix ${\bf{B}}_{\sf KKT}(\bar {\bf{f}}^{})^{-1}{\bf{A}}_{\sf KKT}(\bar {\bf{f}}^{})$ in a computationally efficient fashion.

\begin{algorithm} [t]
\caption{GPI-RS} \label{alg:main} 
\begin{algorithmic} 
\State {\bf{initialize}}: $\bar {\bf{f}}_{(0)} = {\text{MRT}}$
\State Set the iteration count $t = 1$.
\WHILE {$\left\|\bar {\bf{f}}_{(t)} - \bar {\bf{f}}_{(t-1)} \right\| > \epsilon$}
\State Construct the matrices $ {\bf{A}}_{\sf KKT} (\bar {\bf{f}}_{(t-1)})$ and $ {\bf{B}}_{\sf KKT} (\bar {\bf{f}}_{(t-1)})$ by using \eqref{eq:lem_A_kkt} and \eqref{eq:lem_B_kkt}. 
\State Update $\bar {\bf{f}}_{(t)} \leftarrow \frac{{\bf{B}}_{\sf KKT} (\bar {\bf{f}}_{(t-1)})^{-1} {\bf{A}}_{\sf KKT} (\bar {\bf{f}}_{(t-1)}) \bar {\bf{f}}_{(t-1)}} {\| {\bf{B}}_{\sf KKT} (\bar {\bf{f}}_{(t-1)})^{-1} {\bf{A}}_{\sf KKT} (\bar {\bf{f}}_{(t-1)}) \bar {\bf{f}}_{(t-1)} \|}$.
\State $t \leftarrow t+1$. 
\ENDWHILE
\end{algorithmic}
\end{algorithm}

\subsection{Generalized Power Iteration for Rate-Splitting}
The basic process of the proposed GPI-RS follows that of the conventio power iteration. 
Given $\bar {\bf{f}}_{(t-1)}$ obtained in the $(t-1)$th iteration, we construct the matrices $ {\bf{B}}_{\sf KKT} (\bar {\bf{f}}_{(t-1)})$ and ${\bf{A}}_{\sf KKT} (\bar {\bf{f}}_{(t-1)})$ using \eqref{eq:lem_A_kkt} and \eqref{eq:lem_B_kkt}. Then, we update the precoding vector for the current iteration as 
\begin{align}
    \label{eq:PowerIter}
    \bar {\bf{f}}_{(t)} \leftarrow \frac{{\bf{B}}_{\sf KKT}^{-1} (\bar {\bf{f}}_{(t-1)}) {\bf{A}}_{\sf KKT} (\bar {\bf{f}}_{(t-1)}) \bar {\bf{f}}_{(t-1)}}{\| {\bf{B}}_{\sf KKT}^{-1} (\bar {\bf{f}}_{(t-1)}) {\bf{A}}_{\sf KKT} (\bar {\bf{f}}_{(t-1)}) \bar {\bf{f}}_{(t-1)} \|}.
\end{align}
We repeat this process until the convergence criterion is met. In this paper, we use $\left\|\bar {\bf{f}}_{(t)} - \bar {\bf{f}}_{(t-1)} \right\| < \epsilon$ for small enough $\epsilon$. We summarize this process in Algorithm \ref{alg:main}. For an initial point $\bar {\bf{f}}_{(0)}$, we use maximum ratio transmission (MRT), which works well in the later simulations. 

\begin{remark} \normalfont (Joint power control and beamforming) \label{remark:joint}
The proposed GPI-RS identifies the leading eigenvector $\bar {\bf{f}}^{\star}$ that maximizes \eqref{eq:lem_kkt}. Since the vector $\bar {\bf{f}}$ is constructed by stacking all the precoding vectors corresponding to each message, the power allocation and the beamforming direction of each message are jointly identified within the found vector. For example, if $\|\bar {\bf{f}}^{\star}(1:N) \| = 0$, this means $\| {\bf{f}}_{\sf c} \| = 0$ in the obtained solution. Then the common message is not assigned any transmit power, therefore we do not use RSMA and go back to use classical SDMA. Thanks to this feature, the proposed GPI-RS automatically determines the message setups depending on channel conditions; so that there is no need to employ a separate process to determine whether to use a common message. 

\end{remark}

\begin{remark} \normalfont (Algorithm complexity) \label{remark:complexity}
The total computational complexity of the proposed GPI-RS is dominated by the calculation of ${\bf{B}}^{-1}_{\sf KKT}(\bar {\bf{f}})$. The matrix ${\bf{B}}^{-1}_{\sf KKT}(\bar {\bf{f}})$ is the sum of the block-diagonal matrices as presented in \eqref{eq:lem_B_kkt}. Specifically, $K+1$ number of $N \times N$ submatrices are concatenated, so that the total size is $(K+1)N \times (K+1)N$. For this reason, the inverse matrix ${\bf{B}}^{-1}_{\sf KKT}(\bar {\bf{f}})$ is obtained by computing the inverse of each submatrix, and this requires the complexity with the order of $\CMcal{O}(\frac{1}{3} (K+1)N^3)$. This results in that the complexity of the proposed GPI-RS per iteration is with the order of $\CMcal{O}(\frac{1}{3}KN^3)$ when $N$ and $K$ increase with the same order. 
We note that this is substantially small compared to the existing methods. For example, the conventional WMMSE methods based on QCQP \cite{joudeh:16:tcom, joude:twc:17} need the complexity order of $\CMcal{O}((KN)^{3.5})$ {\color{black}{\cite{patil:tcom:18, joudeh:16:tsp}}}. 
Further, the CCCP based method \cite{li:jsac:20} is associated with the complexity order of $\CMcal{O}(N^{6}K^{0.5}2^{3.5K})$. 
In particular, it is noteworthy that the proposed GPI-RS has the linear order complexity with the number of user $K$; which makes the proposed GPI-RS advantageous when there are a large number of users. 
Additionally, our algorithm is easy to implement in practice in that CVX is not needed to use to obtain a solution. 
{\color{black}{We note that the computational complexity of the proposed method can be further reduced by adopting matrix inversion approximation techniques such as Chebyshev iteration \cite{peng:tsp:17} or Neumann series \cite{zhu:icc:15}, or by limiting the maximum number of iterations in the proposed method.}}
\end{remark}

\begin{remark} \normalfont (Selection of the parameter $\alpha$)
Even though small $\alpha$ is desirable since it provides accurate approximation in the LogSumExp technique, using too small $\alpha$ may cause the algorithm to diverge. Analytically identifying the optimal $\alpha$, however, is very challenging. To find the proper $\alpha$ numerically, we can modify the GPI-RS to obtain the smallest $\alpha$ that makes the GPI-RS algorithm converge. Specifically, we start the GPI-RS with a small $\alpha$. If the iteration loop of the GPI-RS does not converge within the predetermined number of iterations, then we enforce to terminate the loop, increase $\alpha$, and newly start the algorithm again. We repeat this process until the algorithm converges before the predetermined number. We can empirically adapt the starting $\alpha$ value and the increasing ratio depending on the system configuration to reduce the algorithm time. 
\end{remark}

{\color{black}{
\begin{remark} \normalfont (Principle of GPI-RS)
We explain the principle of the GPI-RS algorithm through the conventional power iteration. In the conventional power iteration, we obtain the leading eigenvector of a matrix ${\bf{M}} \in \mathbb{C}^{n \times n}$ by iteratively calculating ${\bf{q}}_{(t+1)} = \frac{{\bf{M}}^{t} {\bf{q}}_{(0)}}{\| {\bf{M}}^{t} {\bf{q}}_{(0)} \| }$. A rationale that the power iteration converges to the leading eigenvector of ${\bf{M}}$ is as follows. Since a set of eigenvectors form a set of basis, we can represent ${\bf{q}}_{(0)} = \sum_{i = 1}^{n} \alpha_i {\bf{x}}_i$, where ${\bf{x}}_i$ indicates the $i$-th eigenvector and $\alpha_i$ is the corresponding weight. Denoting that $\lambda_i$ is the $i$-th eigenvalue that $|\lambda_1| > |\lambda_2| \ge \cdots \ge |\lambda_n|$, we use ${\bf{M}} = \sum_{i = 1}^{n} \lambda_i {\bf{x}}_i $ to derive 
\begin{align}
    {\bf{M}} {\bf{q}}_{(t)} 
    &= \sum_{i = 1}^{n} \alpha_i \lambda_i^t {\bf{x}}_i = \alpha_1 \lambda_1^t \left( {\bf{x}}_1 + \underbrace{ \sum_{i = 2}^{n} \frac{\alpha_i}{\alpha_1} \left(\frac{\lambda_i}{\lambda_1} \right)^t {\bf{x}}_i}_{(a)} \right).
 \end{align}
As $t \rightarrow \infty$, (a) vanishes; thereby the remaining term converges to the leading eigenvector ${\bf{x}}_1$. 

As presented in Proposition \ref{prop:optimal}, our problem generalizes a conventional eigenvalue problem by considering an eigenvector-dependent matrix ${\bf{M}}({\bf{x}})$, known as NEPv \cite{cai:siam:18}. That is to say, we aim to identify ${\bf{x}}_1$ that fulfills ${\bf{M}}({\bf{x}}_1) {\bf{x}}_1 = \lambda_1 {\bf{x}}_1$, where $\lambda_1$ is the maximum eigenvalue and ${\bf{M}}({\bf{x}}) = {\bf{B}}_{\sf KKT}( {\bf{x}}^{})^{-1}{\bf{A}}_{\sf KKT}( {\bf{x}}^{})$ in our case. For convenience, denote ${\bf{M}}({\bf{x}}_i) {\bf{x}}_i = g({\bf{x}}_i)$. Then with an arbitrary vector ${\bf{x}}$, the Taylor expansion at ${\bf{x}}_1$ for $g({\bf{x}})$ leads to
\begin{align}
      g({\bf{x}})^{\sf H}{\bf{x}}_i = g({\bf{x}}_1)^{\sf H} {\bf{x}}_i + ({\bf{x}} - {\bf{x}}_1)^{\sf H} \nabla g({\bf{x}}_1) {\bf{x}}_i + o(\| {\bf{x}} - {\bf{x}}_1 \|). 
\end{align}
On one hand, we have
\begin{align}
    \left(g({\bf{x}})^{\sf H}{\bf{x}}_1\right)^2 = \left(\lambda_1 + o(\|{\bf{x}} - {\bf{x}}_1 \|)\right)^2
\end{align}
due to the fact that ${\bf{M}}({\bf{x}}_1) {\bf{x}}_1 = \lambda_1 {\bf{x}}_1$. On the other hand, assuming that $\{{\bf{v}}_1, \cdots, {\bf{v}}_n \}$ is a set of orthonormal basis where ${\bf{v}}_1 = {\bf{x}}_1$, we also have 
\begin{align}
    \sum_{i = 2}^{n} \left( g({\bf{x}})^{\sf H}{\bf{v}}_i \right)^2  & \le  \sum_{i = 2}^{n} \left[ \lambda_i^2({\bf{x}}^{\sf H} {\bf{v}}_i)^2 + 2\lambda_i ({\bf{x}}^{\sf H} {\bf{v}}_i) o(\| {\bf{x}} - {\bf{x}}_1 \|) + o(\| {\bf{x}} - {\bf{x}}_1 \|)^2 \right]
    \\
    & \le \left(\lambda_2 \|{\bf{x}} - {\bf{x}}_1 \| + o(\| {\bf{x}} - {\bf{x}}_1 \|)\right)^2
\end{align}
Under a premise that $|\lambda_1| > |\lambda_2| \ge |\lambda_i|, \forall i \neq 1, 2$, by iteratively projecting ${\bf{x}}$ onto ${\bf{M}}({\bf{x}})$ with the GPI-RS algorithm, each component corresponding to non-leading eigenvectors ${\bf{x}}_2, \cdots, {\bf{x}}_n$ vanishes. Accordingly, the GPI-RS converges to the leading eigenvector ${\bf{x}}_1$. 

\end{remark}
}}


\section{Generalization to Multiple-layer Rate Splitting}
This section discusses how to generalize the proposed framework to multiple-layer RSMA. As mentioned in \cite{dai:16:twc}, if groups of users are located within multiple clusters, it is beneficial to exploit a partial common message that includes the messages of a subset of the users. Employing multiple-layer RSMA, the transmit signal ${\bf{x}}$ is given by
\begin{align} \label{eq:signal_model_partial}
    {\bf{x}} = {\bf{f}}_{\sf c}s_{\sf c} + \sum_{i =1}^{G} {\bf{f}}_{{\sf c}, \CMcal{K}_i} s_{{\sf c}, \CMcal{K}_i} + \sum_{k = 1}^{K} {\bf{f}}_k s_k,
\end{align}
where the partial common message $s_{{\sf c}, \CMcal{K}_i}$ is decoded for the users included in $\CMcal{K}_i \subset \CMcal{K}$. We denote $|\CMcal{K}_i| = K_i$. 
If $G = 0$, i.e., there is no partial common message, then \eqref{eq:signal_model_partial} reduces to the $1$-layer RSMA \eqref{eq:transmit_signal}. Assuming $\CMcal{K}_i \cap \CMcal{K}_j = \emptyset$, user $k$ for $ k \in \CMcal{K}_i$ decodes the messages with the following order: $s_{\sf c} \rightarrow s_{{\sf c}, \CMcal{K}_i} \rightarrow s_k$. 
To guarantee that the partial common message $s_{{\sf c}, \CMcal{K}_i}$ is successfully decoded for the users in $\CMcal{K}_i$, the information rate of $s_{{\sf c}, \CMcal{K}_i}$ is determined as 
\begin{align}
    R_{{\sf c},\CMcal{K}_i} = \mathop{\min}_{k \in \CMcal{K}_i} \left\{\mathbb{E}_{\{\hat {\bf{h}}_k \}} \left[ R_{{\sf c}, \CMcal{K}_i}^{\sf ins.} (k) \right] \right\}.
\end{align}
By using the same technique presented in Section \ref{sec:perf_ch}, we derive a lower bound as
\begin{align}
    R_{{\sf c}, \CMcal{K}_i} \ge \mathbb{E}_{\{\hat {\bf{h}}_{k \in \CMcal{K}_i}\}} \left[ \mathop{\min}_{k \in \CMcal{K}_i} \left\{ \bar R_{{\sf c}, \CMcal{K}_i}^{\sf ins.}(k)\right\} \right],
\end{align}
where
\begin{align} \label{eq:ins_se_lb_pcomm}
    \bar R_{{\sf c}, \CMcal{K}_i}^{\sf ins.}(k) = \log_2 \left( 1 + \frac{|\hat {\bf{h}}_{k}^{\sf H} {\bf{f}}_{{\sf c}, \CMcal{K}_i}|^2}{\sum_{j = 1, j \neq i}^{G} |\hat {\bf{h}}_{k}^{\sf H} {\bf{f}}_{{\sf c}, \CMcal{K}_j}|^2 +  \sum_{\ell = 1}^{K} |\hat {\bf{h}}_{k}^{\sf H} {\bf{f}}_\ell|^2 + \sum_{j = 1}^{G}{\bf{f}}_{{\sf c}, \CMcal{K}_j}^{\sf H} {\bf{\Phi}}_k {\bf{f}}_{{\sf c}, \CMcal{K}_j} + \sum_{\ell = 1}^{K} {\bf{f}}_{\ell}^{\sf H} {\bf{\Phi}}_k {\bf{f}}_{\ell}  +  \frac{\sigma^2}{P}} \right).
\end{align}
We observe that the SINR in \eqref{eq:ins_se_lb_pcomm} does not have interference from the common message $s_{\sf c}$ since we assume that the common message is already decoded and eliminated via SIC. Similar to this, we also characterize a lower bound on the instantaneous spectral efficiency for the common message and the private message as follows: 
\begin{align} \label{eq:ins_se_lb_pcomm_others}
    &\bar R_{\sf c}^{\sf ins.}(k) \nonumber \\
    & = \log_2 \left( 1 + \frac{|\hat {\bf{h}}_{k}^{\sf H} {\bf{f}}_{{\sf c}}|^2}{\sum_{j = 1}^{G} |\hat {\bf{h}}_{k}^{\sf H} {\bf{f}}_{{\sf c}, \CMcal{K}_j}|^2 +  \sum_{\ell = 1}^{K} |\hat {\bf{h}}_{k}^{\sf H} {\bf{f}}_\ell|^2 + {\bf{f}}_{\sf c}^{\sf H}{\bf{\Phi}}_k {\bf{f}}_{\sf c} + \sum_{j = 1}^{G}{\bf{f}}_{{\sf c}, \CMcal{K}_j}^{\sf H} {\bf{\Phi}}_k {\bf{f}}_{{\sf c}, \CMcal{K}_j} + \sum_{\ell = 1}^{K} {\bf{f}}_{\ell}^{\sf H} {\bf{\Phi}}_k {\bf{f}}_{\ell}  +  \frac{\sigma^2}{P}} \right),\\
    &\bar R_k = \log_2 \left( 1 + \frac{|\hat {\bf{h}}_{k}^{\sf H} {\bf{f}}_{k}|^2}{\sum_{j = 1, j \neq i}^{G} |\hat {\bf{h}}_{k}^{\sf H} {\bf{f}}_{{\sf c}, \CMcal{K}_j}|^2 +  \sum_{\ell = 1, \ell \neq k}^{K} |\hat {\bf{h}}_{k}^{\sf H} {\bf{f}}_\ell|^2 + \sum_{j = 1, j\neq i}^{G}{\bf{f}}_{{\sf c}, \CMcal{K}_j}^{\sf H} {\bf{\Phi}}_k {\bf{f}}_{{\sf c}, \CMcal{K}_j} + \sum_{\ell = 1}^{K} {\bf{f}}_{\ell}^{\sf H} {\bf{\Phi}}_k {\bf{f}}_{\ell}  +  \frac{\sigma^2}{P}} \right).
\end{align}
Accordingly, we formulate the sum spectral efficiency maximization problem for multiple-layer RSMA as 
\begin{align} \label{eq:lb_problem_pcomm}
\mathop{{\text{maximize}}}_{{\bf{f}}_{\sf c}, {\bf{f}}_{{\sf c}, \CMcal{K}_1}, \cdots, {\bf{f}}_{{\sf c}, \CMcal{K}_G}, {\bf{f}}_1, \cdots,{\bf{f}}_K}& \;\; \min_{k \in \CMcal{K }} \{\bar R_{\sf c}^{\sf ins.}(k)\} + \sum_{i = 1}^{G} \min_{k \in \CMcal{K }_i} \{\bar R_{{\sf c},\CMcal{K}_i}^{\sf ins.}(k)\}  + \sum_{k = 1}^{K} \bar R_{k}^{\sf ins.} \\
{\text{subject to}} & \;\;  \left\| {\bf{f}}_{\sf c} \right\|^2 + \sum_{k = 1}^{K} \left\| {\bf{f}}_k \right\|^2  \le 1 .\label{eq:lb_problem_constraint}
\end{align}
To obtain a solution of \eqref{eq:lb_problem_pcomm}, we reformulate \eqref{eq:lb_problem_pcomm} by following our approach used in the $1$-layer RSMA. We first approximate $ \mathop{\min}_{k \in \CMcal{K}_i} \left\{ \bar R_{{\sf c}, \CMcal{K}_i}^{\sf ins.}(k)\right\} $ by using the LogSumExp technique as
\begin{align} \label{eq:approximation_R}
\min_{k \in \CMcal{K }_i} \{\bar R_{{\sf c}, \CMcal{K}_i}^{\sf ins.}(k)\} \approx  -\alpha \log \left(\frac{1}{K_i} \sum_{k \in \CMcal{K}_i} \exp\left( \frac{\bar R_{{\sf c}, \CMcal{K}_i}^{\sf ins.}(k)}{-\alpha}\right) \right).
\end{align}
By following our approach, we define a high dimensional precoding vector $\bar {\bf{f}}$ as
\begin{align} \label{eq:stack_pcomm}
\bar {\bf{f}} = [{\bf{f}}_{\sf c}^{\sf T}, {\bf{f}}_{{\sf c}, \CMcal{K}_1}^{\sf T}, \cdots, {\bf{f}}_{{\sf c}, \CMcal{K}_G}^{\sf T},  {\bf{f}}_1^{\sf T}, \cdots, {\bf{f}}_K^{\sf T}]^{\sf T} \in \mathbb{C}^{N(K+G+1) \times 1}.
\end{align}
The higher dimensional precoding vector $\bar {\bf{f}}$ in \eqref{eq:stack_pcomm} allows to represent the spectral efficiency expression of the partial common message $s_{{\sf c}, \CMcal{K}_i}$ as a Rayleigh quotient form. 
\begin{align} \label{eq:rewrite_commonrate_partial}
\bar R_{{\sf c}, \CMcal{K}_i}^{\sf ins.}(k) = \log_2 \left( \frac{\bar {\bf{f}}^{\sf H} {\bf{A}}_{{\sf c}, \CMcal{K}_i} (k) \bar {\bf{f}}}{\bar {\bf{f}}^{\sf H} {\bf{B}}_{{\sf c}, \CMcal{K}_i}(k)  \bar {\bf{f}} } \right),
\end{align}
where
\begin{align}
&{\bf{A}}_{{\sf c},\CMcal{K}_i}(k) =  {\rm blkdiag} \left({\bf{0}}, (\hat {\bf{h}}_k \hat {\bf{h}}_k^{\sf H} + {\bf{\Phi}}_k) , \cdots , (\hat {\bf{h}}_k \hat {\bf{h}}_k^{\sf H} + {\bf{\Phi}}_k) \right) + {\bf{I}}_{N(K+G+1)} \frac{ \sigma^2}{P}, \\
& {\bf{B}}_{{\sf c},\CMcal{K}_i}(k) = {\bf{A}}_{{\sf c},\CMcal{K}_i}(k) - {\rm blkdiag} \left({\bf{0}}, \cdots, \underbrace{{\hat {\bf{h}}_k \hat {\bf{h}}_k^{\sf H}}}_{(1+i)\text{th block}}, \cdots, {\bf{0}} \right) .
\end{align}
We also represent $\bar R_{\sf c}^{\sf ins.}(k)$ and $\bar R_k^{\sf ins.}$ adequately by considering the partial common message. 
With this representation, the problem \eqref{eq:lb_problem_pcomm} is converted to
\begin{align} \label{eq:problem_new_pcomm}
\mathop{{\text{maximize}}}_{\bar {\bf{f}}} \;\;  
 &\log \left(\frac{1}{K}  \sum_{k = 1}^{K}  \exp \left(\log_2 \left( \frac{\bar {\bf{f}}^{\sf H} {\bf{A}}_{\sf c} (k) \bar {\bf{f}}}{\bar {\bf{f}}^{\sf H} {\bf{B}}_{\sf c}(k)  \bar {\bf{f}} }  \right)^{-\frac{1}{\alpha}} \right)  \right)^{-\alpha} \nonumber \\ & +\sum_{i = 1}^{G} \log \left(\frac{1}{K_i} \sum_{k \in \CMcal{K}_i} \exp \left(\log_2 \left( \frac{\bar {\bf{f}}^{\sf H} {\bf{A}}_{{\sf c}, \CMcal{K}_i}(k) \bar {\bf{f}}}{\bar {\bf{f}}^{\sf H} {\bf{B}}_{{\sf c}, \CMcal{K}_i}(k) \bar {\bf{f}}}\right) \right)^{-\frac{1}{\alpha}} \right)^{-\alpha}  +  \sum_{k = 1}^{K}  \log_2 \left( \frac{\bar {\bf{f}}^{\sf H} {\bf{A}}_k \bar {\bf{f}}}{\bar {\bf{f}}^{\sf H} {\bf{B}}_k \bar {\bf{f}}}   \right) \\
\label{eq:constraint_new}
{\text{subject to}} & \;\; \left\| \bar {\bf{f}} \right\|^2 = 1.
\end{align}
To apply the GPI-RS algorithm for \eqref{eq:problem_new_pcomm}, we derive the optimality condition for \eqref{eq:problem_new_pcomm} in the following corollary. 
\begin{corollary} \label{coro:pcomm}
 The first-order optimality condition of the optimization problem \eqref{eq:problem_new_pcomm} is satisfied if the following holds:
\begin{align} \label{eq:lem_kkt_pcomm}
{\bf{B}}_{\sf KKT}^{-1} (\bar {\bf{f}}){\bf{A}}_{\sf KKT} (\bar {\bf{f}}) \bar {\bf{f}} = \lambda (\bar {\bf{f}})  \bar {\bf{f}},
\end{align} 
where 
\begin{align} 
    &{\bf{A}}_{\sf KKT}(\bar {\bf{f}}) =  \lambda_{\sf num} (\bar {\bf{f}}) \times \left[ \sum_{k = 1}^{K}  \left\{ \frac{\exp\left( \frac{1}{-\alpha}  \log_2\left(\frac{\bar {\bf{f}}^{\sf H} {\bf{A}}_{\sf c}(k) \bar {\bf{f}}}{\bar {\bf{f}}^{\sf H} {\bf{B}}_{\sf c}(k) \bar {\bf{f}} } \right) \right)}{\sum_{\ell = 1}^{K} \exp\left(\frac{1}{-\alpha} \log_2\left(\frac{\bar {\bf{f}}^{\sf H} {\bf{A}}_{\sf c}(\ell) \bar {\bf{f}}}{\bar {\bf{f}}^{\sf H} {\bf{B}}_{\sf c}(\ell) \bar {\bf{f}} } \right)\right)} \frac{{\bf{A}}_{\sf c}(k)}{\bar {\bf{f}}^{\sf H} {\bf{A}}_{\sf c}(k) \bar {\bf{f}}} + \frac{{\bf{A}}_k}{\bar {\bf{f}}^{\sf H} {\bf{A}}_k \bar {\bf{f}}} \right\}\right. \nonumber \\
    & + \left. \sum_{i = 1}^{G} \left\{ \sum_{k \in \CMcal{K}_i} \left(\frac{\exp \left( \frac{1}{-\alpha} \frac{\bar {\bf{f}}^{\sf H} {\bf{A}}_{{\sf c}, \CMcal{K}_i}(k) \bar {\bf{f}}}{\bar {\bf{f}}^{\sf H} {\bf{B}}_{{\sf c}, \CMcal{K}_i}(k) \bar {\bf{f}}} \right)}{\sum_{j \in \CMcal{K}_i} \exp \left(\frac{1}{-\alpha} \log_2 \left( \frac{\bar {\bf{f}}^{\sf H} {\bf{A}}_{{\sf c}, \CMcal{K}_i}(j) \bar {\bf{f}}}{\bar {\bf{f}}^{\sf H} {\bf{B}}_{{\sf c}, \CMcal{K}_i}(j) \bar {\bf{f}}} \right) \right) } \frac{{\bf{A}}_{{\sf c}, \CMcal{K}_i} (k) }{\bar {\bf{f}}^{\sf H} {\bf{A}}_{{\sf c}, \CMcal{K}_i} (k) \bar {\bf{f}} } \right) \right\}\right], \label{eq:lem_A_kkt_pcomm} 
\end{align}
\begin{align}
    &{\bf{B}}_{\sf KKT}(\bar {\bf{f}}) = \lambda_{\sf den} (\bar {\bf{f}}) \times \left[ \sum_{k = 1}^{K}  \left\{ \frac{\exp\left( \frac{1}{-\alpha}  \log_2\left(\frac{\bar {\bf{f}}^{\sf H} {\bf{A}}_{\sf c}(k) \bar {\bf{f}}}{\bar {\bf{f}}^{\sf H} {\bf{B}}_{\sf c}(k) \bar {\bf{f}} } \right) \right)}{\sum_{\ell = 1}^{K} \exp\left(\frac{1}{-\alpha} \log_2\left(\frac{\bar {\bf{f}}^{\sf H} {\bf{A}}_{\sf c}(\ell) \bar {\bf{f}}}{\bar {\bf{f}}^{\sf H} {\bf{B}}_{\sf c}(\ell) \bar {\bf{f}} } \right)\right)} \frac{{\bf{B}}_{\sf c}(k)}{\bar {\bf{f}}^{\sf H} {\bf{B}}_{\sf c}(k) \bar {\bf{f}}} + \frac{{\bf{B}}_k}{\bar {\bf{f}}^{\sf H} {\bf{B}}_k \bar {\bf{f}}} \right\} \right. \nonumber \\
    & + \left. \sum_{i = 1}^{G} \left\{ \sum_{k \in \CMcal{K}_i} \left(\frac{\exp \left( \frac{1}{-\alpha} \frac{\bar {\bf{f}}^{\sf H} {\bf{A}}_{{\sf c}, \CMcal{K}_i}(k) \bar {\bf{f}}}{\bar {\bf{f}}^{\sf H} {\bf{B}}_{{\sf c}, \CMcal{K}_i}(k) \bar {\bf{f}}} \right)}{\sum_{j \in \CMcal{K}_i} \exp \left(\frac{1}{-\alpha} \log_2 \left( \frac{\bar {\bf{f}}^{\sf H} {\bf{A}}_{{\sf c}, \CMcal{K}_i}(j) \bar {\bf{f}}}{\bar {\bf{f}}^{\sf H} {\bf{B}}_{{\sf c}, \CMcal{K}_i}(j) \bar {\bf{f}}} \right) \right) } \frac{{\bf{B}}_{{\sf c}, \CMcal{K}_i} (k) }{\bar {\bf{f}}^{\sf H} {\bf{B}}_{{\sf c}, \CMcal{K}_i} (k) \bar {\bf{f}} } \right) \right\}\right]
    , \label{eq:lem_B_kkt_pcomm}
\end{align}
with
\begin{align} \label{eq:lem_lambda_pcomm}
    \lambda(\bar {\bf{f}}) =& \left\{\frac{1}{K  } \sum_{k = 1}^{K} \exp\left( \log_2
    \left(\frac{\bar {\bf{f}}^{\sf H} {\bf{A}}_{\sf c}(k) \bar {\bf{f}}}{\bar {\bf{f}}^{\sf H} {\bf{B}}_{\sf c}(k) {\bf{f}}} \right)^{-\frac{1}{\alpha }} \right)\right\}^{-\frac{\alpha}{\log_2 e}}   \times  \prod_{k = 1}^{K} \left(\frac{\bar {\bf{f}}^{\sf H} {\bf{A}}_k \bar {\bf{f}}}{\bar {\bf{f}}^{\sf H} {\bf{B}}_k \bar {\bf{f}}} \right) \nonumber \\
    & \times \prod_{i = 1}^{G} \left\{\frac{1}{K_i} \sum_{k \in \CMcal{K}_i} \exp \left(\log_2 \left( \frac{\bar {\bf{f}}^{\sf H} {\bf{A}}_{{\sf c}, \CMcal{K}_i}(k) \bar {\bf{f}}}{\bar {\bf{f}}^{\sf H} {\bf{B}}_{{\sf c}, \CMcal{K}_i}(k) \bar {\bf{f}}}\right) \right)^{-\frac{1}{\alpha}} \right\}^{-\frac{\alpha}{\log_2 e}} = \frac{\lambda_{\sf num} (\bar {\bf{f}})}{\lambda_{\sf den} (\bar {\bf{f}})} .
\end{align}
\end{corollary}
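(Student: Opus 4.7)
The plan is to mirror the derivation of Lemma \ref{lem:main} and simply absorb the new partial-common-message contributions into the matrix pencil $({\bf A}_{\sf KKT}, {\bf B}_{\sf KKT})$. Concretely, I will take the objective of \eqref{eq:problem_new_pcomm} as a sum of three kinds of terms, namely the LogSumExp surrogate for the global common rate $\min_k \bar R_{\sf c}^{\sf ins.}(k)$, the $G$ new LogSumExp surrogates $-\alpha \log\bigl(\tfrac{1}{K_i}\sum_{k\in\CMcal K_i}\exp(\bar R_{{\sf c},\CMcal K_i}^{\sf ins.}(k)/(-\alpha))\bigr)$, and the private-rate sum $\sum_k \bar R_k^{\sf ins.}$. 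Since the added $G$ terms are structurally identical to the global-common term (only the index set and the matrices $({\bf A}_{{\sf c},\CMcal K_i}(k),{\bf B}_{{\sf c},\CMcal K_i}(k))$ change), the gradient computation performed in Appendix \ref{proof:lem1} can be replayed verbatim on each new term, so I expect essentially no new conceptual ingredients.

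The steps I would carry out are as follows. First, I write the objective as $\log_2 F(\bar{\bf f})$ where $F$ is a product of three factors: the inverse of the LogSumExp aggregate for the common message, the inverse of the analogous LogSumExp aggregate for each partial-common group, and $\prod_k (\bar{\bf f}^{\sf H}{\bf A}_k\bar{\bf f})/(\bar{\bf f}^{\sf H}{\bf B}_k\bar{\bf f})$. Since the constraint $\|\bar{\bf f}\|^2 = 1$ is harmless (the objective is invariant under positive scaling of $\bar{\bf f}$), taking the Wirtinger derivative $\partial_{\bar{\bf f}^{\sf H}}\log F(\bar{\bf f}) = 0$ produces a condition of the form $\mathbf{\tilde A}(\bar{\bf f})\bar{\bf f} = \mathbf{\tilde B}(\bar{\bf f})\bar{\bf f}$, where both sides collect the numerator-type and denominator-type contributions of every Rayleigh quotient. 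Using the identity $\partial_{\bar{\bf f}^{\sf H}}(\bar{\bf f}^{\sf H}{\bf M}\bar{\bf f}) = {\bf M}\bar{\bf f}$ together with the chain rule on $\log$ and $\exp$, the LogSumExp part produces exactly the softmax weights $\exp(\cdot)/\sum_\ell \exp(\cdot)$ that appear in \eqref{eq:lem_A_kkt_pcomm} and \eqref{eq:lem_B_kkt_pcomm}; each such weight multiplies ${\bf A}_{{\sf c},\CMcal K_i}(k)/(\bar{\bf f}^{\sf H}{\bf A}_{{\sf c},\CMcal K_i}(k)\bar{\bf f})$ on the numerator side and ${\bf B}_{{\sf c},\CMcal K_i}(k)/(\bar{\bf f}^{\sf H}{\bf B}_{{\sf c},\CMcal K_i}(k)\bar{\bf f})$ on the denominator side.

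Next, I factor out the scalar $\lambda(\bar{\bf f})$ defined in \eqref{eq:lem_lambda_pcomm}. Note that $\lambda(\bar{\bf f})$ is precisely the value of $F(\bar{\bf f})$ (up to the $\log_2 e$ normalisation that converts natural to base-$2$ logs), with the new product $\prod_i\{\cdot\}^{-\alpha/\log_2 e}$ accounting for the $G$ partial-common groups. Writing $\lambda(\bar{\bf f}) = \lambda_{\sf num}(\bar{\bf f})/\lambda_{\sf den}(\bar{\bf f})$ as in the lemma and absorbing $\lambda_{\sf num}$ into ${\bf A}_{\sf KKT}$ and $\lambda_{\sf den}$ into ${\bf B}_{\sf KKT}$ turns the optimality equation into ${\bf A}_{\sf KKT}(\bar{\bf f})\bar{\bf f} = \lambda(\bar{\bf f}){\bf B}_{\sf KKT}(\bar{\bf f})\bar{\bf f}$; left-multiplying by ${\bf B}_{\sf KKT}^{-1}(\bar{\bf f})$ yields \eqref{eq:lem_kkt_pcomm}.

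The only non-routine point is bookkeeping: the stacked vector $\bar{\bf f}$ now has $K+G+1$ blocks of length $N$, so the matrices ${\bf A}_{{\sf c},\CMcal K_i}(k)$ and ${\bf B}_{{\sf c},\CMcal K_i}(k)$ must be block-diagonal of size $N(K+G+1)$ with the $\hat{\bf h}_k\hat{\bf h}_k^{\sf H}$ term placed in the $(1+i)$-th block as in \eqref{eq:rewrite_commonrate_partial}. Once this indexing is fixed, the derivation is a direct extension of Lemma \ref{lem:main}, so I expect the main obstacle to be purely notational rather than analytical.
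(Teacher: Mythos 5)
Your proposal is correct and follows exactly the route the paper takes: the paper's own proof of Corollary \ref{coro:pcomm} is literally ``extend Lemma \ref{lem:main},'' i.e., replay the Appendix derivation (Rayleigh-quotient derivative, chain rule through the LogSumExp producing softmax weights, then factoring $\lambda_{\sf num}/\lambda_{\sf den}$ into the pencil) on the enlarged objective with the $G$ additional partial-common terms and the $N(K+G+1)$-block stacking. Your sketch is in fact more explicit than the paper's one-line proof, and the bookkeeping point you flag is the only substantive change.
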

\begin{proof}
It can be proven by extending Lemma \ref{lem:main}. 
\end{proof}
With Corollary \ref{coro:pcomm}, we apply the GPI-RS described in Algorithm \ref{alg:main} by using \eqref{eq:lem_A_kkt_pcomm} and \eqref{eq:lem_B_kkt_pcomm} instead of \eqref{eq:lem_A_kkt} and \eqref{eq:lem_B_kkt}. 



\section{Numerical Results}
\label{sec:numerical}
In this section, we evaluate the sum spectral efficiency performance to demonstrate the proposed GPI-RS. For the baseline methods, we consider the followings:
\begin{itemize}
    \item {{{\textbf{MRT}}}}: The precoding vectors is designed by matching the estimated channel vector. Specifically, we have 
    ${\bf{f}}_k = \hat {\bf{h}}_k$, $k \in \CMcal{K},\; \mathrm{and} \;\;{\bf{f}}_{\sf c} = \textbf{0}$. 
    \item {{{\textbf{RZF}}}} : The precoding vectors are designed by following the ZF rule, while regularizing it depending on SNR:
    \begin{align}
        {\bf{f}}_k = \left(\hat {\bf{H}} \hat {\bf{H}}^{\sf H} + {\bf{I}} \frac{\sigma^2}{P} \right)^{-1} \hat {\bf{h}}_k^{\sf H}, \; k \in \CMcal{K}, \; \mathrm{and} \;\;{\bf{f}}_{\sf c} = \textbf{0}.
    \end{align}
    As SNR goes to infinity, RZF becomes equal to ZF. 
    \item {{{\textbf{Sum SE Max with no RS}}}}: In this method, we use the method proposed in \cite{choi:twc:20} to maximize the sum spectral efficiency using classical SDMA without considering RSMA. 
    Note that we do not incorporate the CSIT estimation error into this method, i.e., we treat the estimated channel vector as the true channel in this method. 
    \item {{{\textbf{WMMSE-SAA}}}}: This case indicates the WMMSE method with the SAA technique \cite{joudeh:16:tcom}. We use $1000$ samples for the SAA technique. A detailed description of this approach is presented in Section \ref{sec:preliminary}. 
\end{itemize}

In what follows, we present the simulation results. 

{\textbf{Ergodic Sum Spectral Efficiency per SNR}}: First, we compare the ergodic sum spectral efficiency of the proposed GPI-RS and the other baseline methods. 
The basic simulation setups are explained in the caption of Fig.~\ref{fig:compare}. For updating $\alpha$, we set the initial $\alpha$ value as $0.1$ if ${\sf{SNR}} < 15 {\text{dB}}$, and $0.5$ for the rest of the cases. We note that this initial setup is designed empirically. If the GPI-RS loop is not terminated within $50$ iterations, we increase $\alpha$ by $0.5$ and repeat the algorithm. 
As shown in Fig.~\ref{fig:compare}, the proposed GPI-RS provides meaningful spectral efficiency gains over the baseline methods in both of the $6\times4$ and the $12\times8$ cases. In particular, compared to the WMMSE-SAA method at ${\sf{SNR}} = 40 {\text{dB}}$, the GPI-RS obtains around $24\%$ gains in the $6\times4$ case and $19\%$ gains in the $12\times8$ case. 
We observe that considerable gains are achieved in both cases by using the proposed method. The rationales of the performance gains are two folds. First, the GPI-RS can reach the best local optimal point by NEPv principle, while the WMMSE-SAA approach cannot guarantee the best local optimum. Second, we incorporate the CSIT estimation error into our performance characterization in a rigorous way, while the WMMSE-SAA relies on randomly generated samples. 
These two features bring the gains of the proposed GPI-RS method. 
{\color{black}{Especially, the second point sheds light on a reason why the proposed GPI-RS obtains more significant performance gains in the high SNR regime. In the high SNR regime, the performance is mainly determined by the interference induced from the inaccurate CSIT as the noise becomes negligible. For this reason, to achieve high spectral efficiency performance, rigorous treatment of CSIT error is required. The SAA, used in the WMMSE-SAA, relies on randomly generated samples instead of incorporating the CSIT estimation error effects into the spectral efficiency performance. 
Compared to this, the GPI-RS specifically incorporates the CSIT estimation error effects into its design by deriving a rigorous lower bound. This difference causes more significant performance gaps in the high SNR regime. 
}}{\color{black}{ Additionally, by comparing the sum SE max with no RS case, we see that the spectral efficiency gains of the proposed GPI-RS increase as SNR increases. This indicates that the common message rate portion increases in the high SNR regime, which matches our intuition on RSMA. 
}}

{\color{black}{In Fig.~\ref{fig:large}, we also illustrate the ergodic sum spectral efficiency  by assuming the the $32 \times 16$ case. 
As in the above case, we also observe that the GPI-RS provides significant gains over the baseline methods in Fig.~\ref{fig:large}. Specifically, at ${\sf{SNR}} = 40 {\text{dB}}$, the GPI-RS offers about $30 \%$ gains compared to the sum spectral efficiency maximization method without RSMA. This demonstrates that the proposed GPI-RS works well in the regimes of large BS antennas and users. 
}}

\begin{figure}[!t]
\centering
$\begin{array}{cc}  
{\resizebox{0.45\columnwidth}{!}
{\includegraphics{ 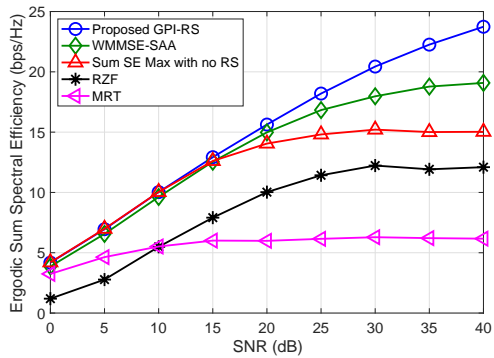}}} \\
\mbox{(a)} \\
{\resizebox{0.45\columnwidth}{!}
{\includegraphics{ 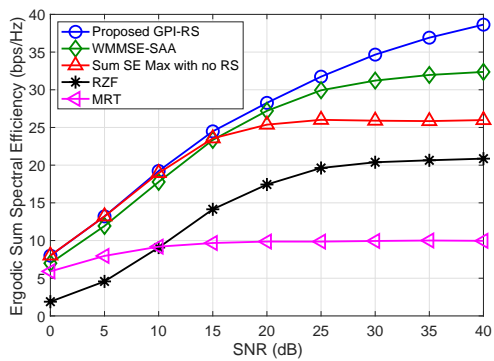}}}  \\  
\mbox{(b)}
\end{array}$
\caption{The sum spectral efficiency comparison per SNR. The simulation setup is: In (a), $N = 6$, $K = 4$, $\tau_{\sf ul} p_{\sf ul} = 4$ and $\sigma^2 = 1$. In (b) $N = 12$, $K = 8$, $\tau_{\sf ul} p_{\sf ul} = 4$ and $\sigma^2 = 1$.
In both cases, each user's location is randomly determined, so that the AoA $\theta_k$ is drawn from the uniform distribution. The angular spread is fixed as $\Delta_k = \pi/6$ for $k \in \CMcal{K}$. 
}
\label{fig:compare}  
\end{figure}

 \begin{figure}[!t]
	\centerline{\resizebox{0.45\columnwidth}{!}{\includegraphics{ 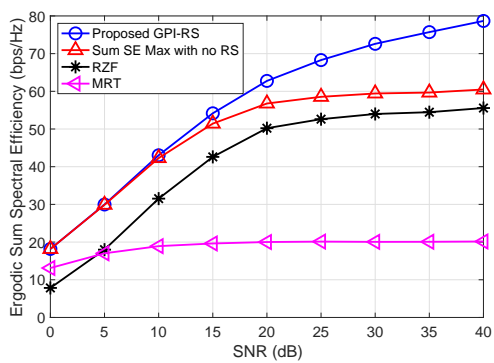}}}     
	\caption{The sum spectral efficiency comparison per SNR assuming $N = 32$, $K = 16$. The other setups are same with Fig.~\ref{fig:compare}. }
 	\label{fig:large}
\end{figure}

{\textbf{Ergodic Sum Spectral Efficiency per CSIT Accuracy}}: Now, we investigate the sum spectral efficiency depending on the CSIT accuracy. We depict the sum spectral efficiency by increasing $\tau_{\sf ul} p_{\sf ul}$ in Fig.~\ref{fig:perErr}. Note that the CSIT estimation accuracy increases as $\tau_{\sf ul} p_{\sf ul}$ increases. In Fig.~\ref{fig:perErr}, we observe that the relative performance gains of the GPI-RS over the WMMSE-SAA increase as the CSIT becomes accurate. For instance, if $\tau_{\sf ul} p_{\sf ul} = 0.1$, the WMMSE-SAA outperforms the GPI-RS by $5\%$, while if $\tau_{\sf ul} p_{\sf ul} = 8$, the GPI-RS outperforms the WMMSE-SAA by $27\%$. This is because, as the CSIT is more accurate, the regularization term of our lower bound also becomes accurate; resulting in that our lower bound becomes tight. Then the CSIT estimation error is suitably reflected into the GPI-RS design. 

 \begin{figure}[!t]
	\centerline{\resizebox{0.45\columnwidth}{!}{\includegraphics{ 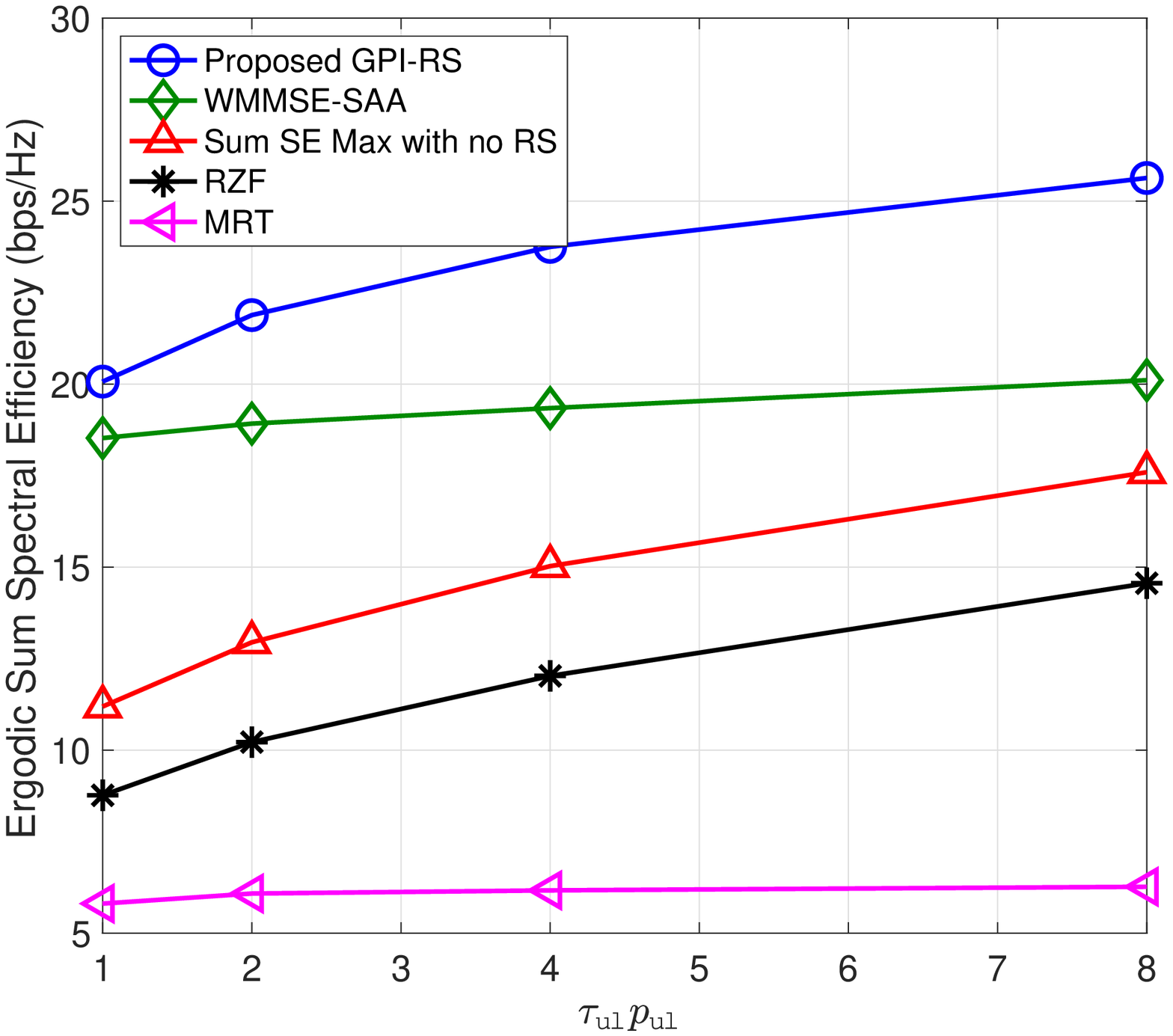}}}     
	\caption{The sum spectral efficiency comparison per CSIT accuracy. The simulation setup is: $N = 6$, $K = 4$, and $\sigma^2 = 1$. The AoA $\theta_k$ is drawn from the uniform distribution. The angular spread is $\Delta_k = \pi/6$ for $k \in \CMcal{K}$.  }
 	\label{fig:perErr}
\end{figure}

{\textbf{Convergence}}: We depict the convergence behavior. The $\alpha$ update process is equal as above: the initial $\alpha$ value is $0.1$ if ${\sf{SNR}} < 15 {\text{dB}}$ and $0.5$ for the rest of the cases. We update $\alpha$ by $0.5$ per every $50$ iterations. We recall that this is for finding the smallest $\alpha$ that guarantees convergence. In Fig.~\ref{fig:converge}, we observe that the GPI-RS converges well with small $\alpha$ in low SNR. On contrary to this, in the high SNR, we need to tune $\alpha$ to make the GPI-RS converge. For instance, when ${\sf{SNR}} = 20 {\text{dB}}$, $\alpha$ needs to be updated one time until convergence, while when ${\sf{SNR}} = 40 {\text{dB}}$, $\alpha$ needs to be updated two times until convergence. Through this observation, we numerically confirm that using the presented $\alpha$ update method, the convergence of the proposed GPI-RS is guaranteed well. 

\begin{figure}[!t]
\centering
$\begin{array}{cc}  
{\resizebox{0.3\columnwidth}{!}
{\includegraphics{ 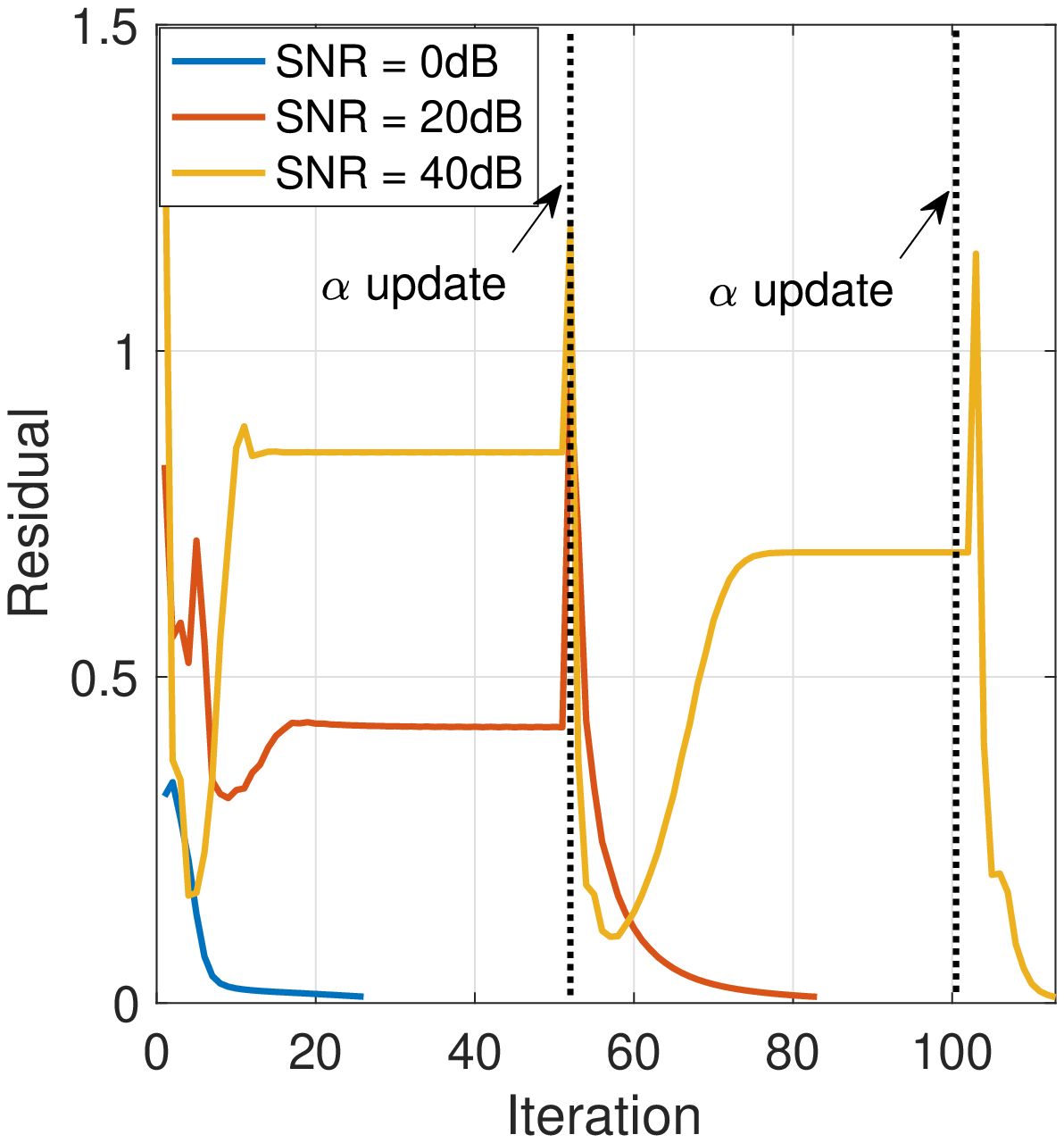}}} &
{\resizebox{0.3\columnwidth}{!}
{\includegraphics{ 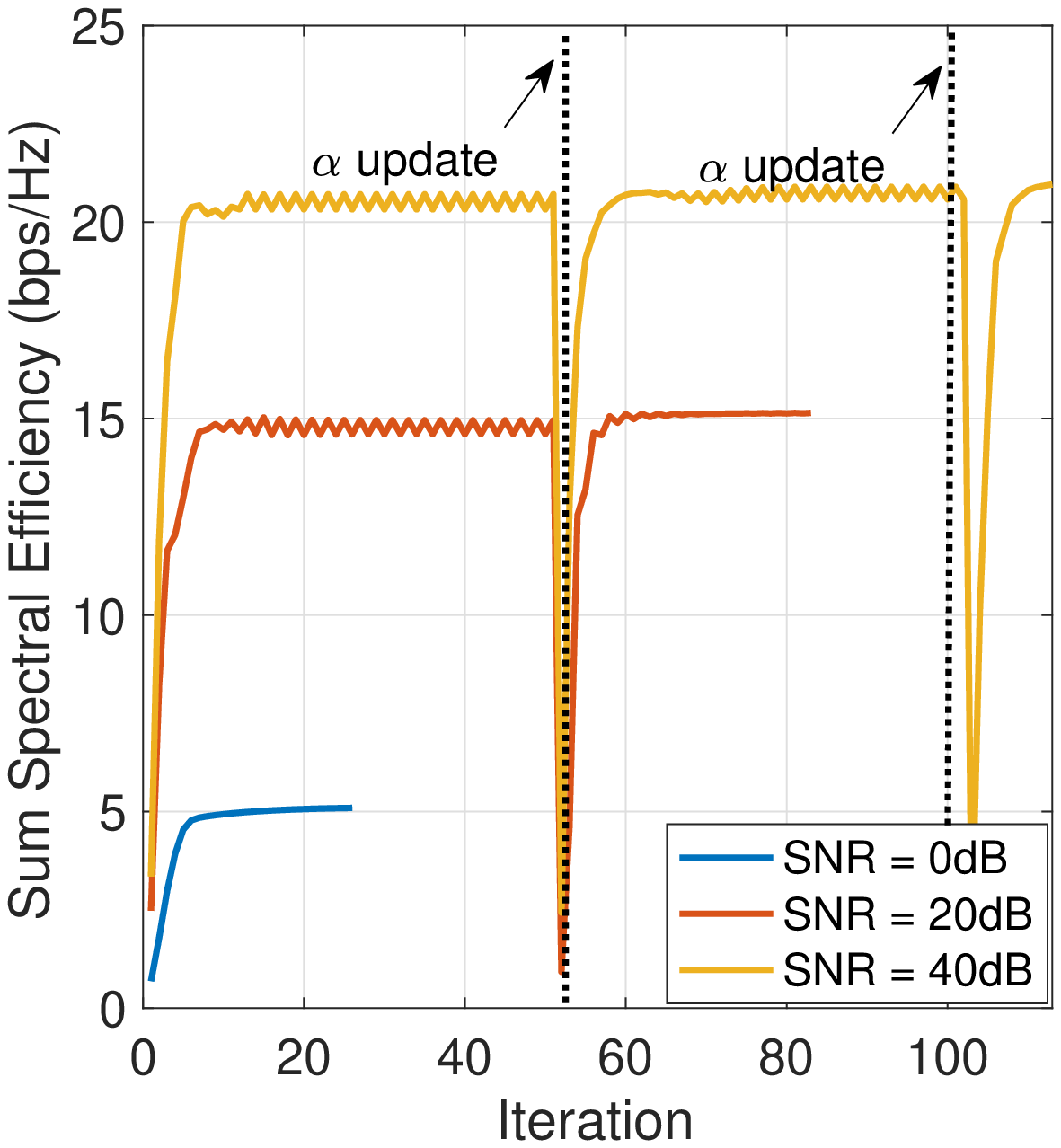}}}  \\  
\mbox{(a)} &
\mbox{(b)}
\end{array}$
\caption{The convergence behavior of the proposed GPI-RS per iteration. In (a) the residual is depicted. In (b). the sum spectral efficiency is drawn. The simulation setup is: $N = 6$, $K = 4$, $\tau_{\sf ul} p_{\sf ul} = 4$ and $\sigma^2 = 1$. The AoA $\theta_k$ is drawn from the uniform distribution and the angular spread is $\Delta_k = \pi/6$ for $k \in \CMcal{K}$. {\color{black}{Residual is defined as $\| \bar {\bf{f}}_{(t)} - \bar {\bf{f}}_{(t-1)} \| $.}}
}
\label{fig:converge}  
\end{figure}

{\color{black}{
\begin{table}[!t]
  \begin{center}
    \caption{Average MATLAB CPU Time (sec)}
    \label{tab:table1}
    \begin{tabular}{c|c|c|c} 
    \hline\hline
    {\text{Setup}} & \textbf{Proposed GPI-RS} & \textbf{WMMSE-SAA} & {\text{Comparison}  ($\%$)} \\
    \hline
      {$6\times 4$} & 2.62 & 35.12 & 7.4$\%$ \\
      {$12\times 8$} & 11.24 & 170.29 & 6.6$\%$ \\
      \hline\hline
    \end{tabular}
  \end{center}
\end{table}

{\textbf{Computation Time}}: 
As a complement to the complexity analysis in Remark~\ref{remark:complexity}, we compare numerical MATLAB computation times between the proposed GPI-RS and the WMMSE-SAA in Table~\ref{tab:table1}. The simulation setups are equivalent to those used to produce Fig.~\ref{fig:compare}. As shown in Table~\ref{tab:table1}, the proposed GPI-RS consumes only $7.4\%$ of the computation time compared to the WMMSE-SAA in the $6 \times 4$ case and $6.6\%$ in the $12 \times 8$ case on average. This dramatic complexity reduction comes from two sources: First, we approximate the non-smooth minimum function via a LogSumExp technique, by which we avoid having $K$ distinct constraints on the common message rate in the optimization problem. Second, by using the GPI-RS, we do not rely on an off-the-shelf optimization toolbox, including CVX.
This result indicates that the proposed method is beneficial in terms of computational complexity, not only in an analytical (big-$\CMcal{O}$) sense but also in a practical (numerical computation time) sense. We note that Table~\ref{tab:table1} is only a rough measure of relative computational complexity. 
}}


 \begin{figure}[!t]
	\centerline{\resizebox{0.45\columnwidth}{!}{\includegraphics{ 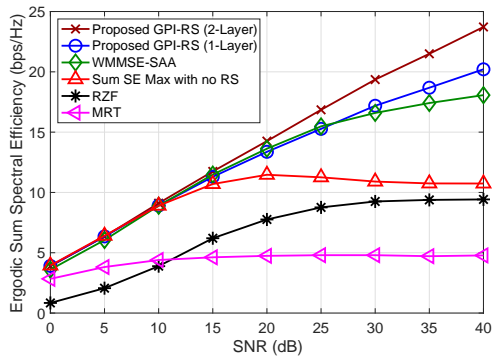}}}
	\caption{The sum spectral efficiency comparison per CSIT accuracy. The simulation setup is: $N = 6$, $K = 4$, $\tau_{\sf ul} p_{\sf ul} = 4$, and $\sigma^2 = 1$. We assume that user $1$, $2$ and user $3$, $4$ are clustered in the same location: $\theta_k = \pi/3$ for $k \in \{1,2\}$ and $\theta_k = 2\pi/3$ for $k \in \{3,4\}$. 
	The angular spread is fixed as $\Delta_k = \pi/6$ for $k \in \CMcal{K}$. }
 	\label{fig:equal}
\end{figure}

{\textbf{Multiple-layer RSMA }}: Next, we evaluate the sum spectral efficiency performance of the multiple-layer RSMA using the generalized GPI-RS. In particular, we assume the $2$-layer RSMA scenario in the $6\times4$ case, wherein one common message, two partial common messages, and private messages are transmitted. Specifically, the partial common messages are $s_{{\sf c}, \CMcal{K}_1}$ and $s_{{\sf c}, \CMcal{K}_2}$, and $\CMcal{K}_1 = \{1,2\}$ and $\CMcal{K}_2 = \{3,4\}$; therefore the partial common message $s_{{\sf c}, \CMcal{K}_1}$ is intended to user $1$ and $2$ and the partial common message $s_{{\sf c}, \CMcal{K}_2}$ is intended to user $3$ and $4$. We construct a favorable channel environment for this message setup, in which users $1$, $2$, and user $3$, $4$ are clustered in the same location, respectively. The performance result is illustrated in Fig.~\ref{fig:equal}, whose caption includes detailed simulation setups. As shown in Fig.~\ref{fig:equal}, the GPI-RS for $2$-layer RSMA achieves the best sum spectral efficiency performance. Specifically, the GPI-RS for $2$-layer RSMA has around $17\%$ gains over the GPI-RS for $1$-layer RSMA and round $31\%$ gains over the WMMSE-SAA. 
This confirms the observation of \cite{dai:twc:16} more in a more rigorous way. We also show that the proposed framework is well extended to multiple-layer RSMA.

\section{Conclusion}
In this paper, we have proposed a novel precoding optimization method for downlink MIMO with RSMA. Aiming to maximize the sum spectral efficiency of the considered system, we have formulated an optimization problem, while a sum spectral efficiency maximization problem regarding linear precoding vectors is infeasible to solve due to its non-convexity and non-smoothness. To resolve this, we have approximated a non-smooth minimum function using the LogSumExp technique and reformulated the problem into a tractable form. We have shown that the first-order optimality condition of the reformulated problem is cast as a NEPv. In order to find the leading eigenvector for the derived condition, we have proposed the GPI-RS. We also have extended the GPI-RS to the multiple-layer RSMA scenario. The simulations have demonstrated that the GPI-RS brings significant spectral efficiency gains in various environments while the associated complexity is small compared to the existing WMMSE-SAA method. 

{\color{black}{
For future work, it is promising to consider the finite blocklength regime where a non-zero decoding probability is induced \cite{choi:iotj:21, poly:tit:10}. In particular, decoding failure on a common message can cause significant interference in the SINR of private messages; therefore, precoders need to be designed carefully to maximize the spectral efficiency. 
In addition, design for physical layer security \cite{lee:arxiv:20, choi:tvt:21} with RSMA is of interest. Considering RSMA in terahertz line-of-sight MIMO environments \cite{do:commmag:21} is also promising. 
}}


\appendices
\section{Proof of Lemma \ref{lem:main}} \label{proof:lem1}
We first derive the KKT condition of the problem \eqref{eq:problem_new}. The corresponding Lagrangian function is defined as
\begin{align}
L(\bar {\bf{f}}) =& \log \left(\frac{1}{K}  \sum_{k = 1}^{K}   \exp \left(\log_2 \left(  \frac{\bar {\bf{f}}^{\sf H} {\bf{A}}_{\sf c} (k) \bar {\bf{f}}}{\bar {\bf{f}}^{\sf H} {\bf{B}}_{\sf c}(k)  \bar {\bf{f}} }  \right)^{-\frac{1}{\alpha}} \right)  \right)^{-\alpha}  + \sum_{k = 1}^{K}  \log_2 \left( \frac{\bar {\bf{f}}^{\sf H} {\bf{A}}_k \bar {\bf{f}}}{\bar {\bf{f}}^{\sf H} {\bf{B}}_k \bar {\bf{f}}}   \right).
\end{align}
To find a stationary point, we take the partial derivatives of $L(\bar {\bf{f}})$ with respect to $\bar {\bf{f}}$ and set it to zero. 
For simplicity, we denote the first and the second part of the Lagrangian function as $L_1(\bar {\bf{f}})$ and $L_2(\bar {\bf{f}})$, respectively. 
Since we have
\begin{align} 
    \partial \left(\frac{\bar {\bf{f}}^{\sf H} {\bf{A}} \bar {\bf{f}}}{\bar {\bf{f}}^{\sf H} {\bf{B}} \bar {\bf{f}}} \right) / \partial \bar {\bf{f}}^{\sf H} = \left( \frac{\bar {\bf{f}}^{\sf H} {\bf{A}} \bar {\bf{f}}}{\bar {\bf{f}}^{\sf H} {\bf{B}} \bar {\bf{f}}} \right) \left[\frac{{\bf{A}} \bar {\bf{f}}}{\bar {\bf{f}}^{\sf H} {\bf{A}} \bar {\bf{f}}} - \frac{ {\bf{B}} \bar {\bf{f}}}{\bar {\bf{f}}^{\sf H} {\bf{B}} \bar {\bf{f}}} \right],
\end{align}
the partial derivative of $L_1(\bar {\bf{f}})$ is obtained using the above calculation:
\begin{align}
&\frac{\partial L_1(\bar {\bf{f}})}{\partial \bar {\bf{f}}^{\sf H}} = 
 \sum_{k = 1}^{K}   \Bigg[ \frac{\exp\left( \! \frac{1}{-\alpha} \! \log_2\left(\frac{\bar {\bf{f}}^{\sf H} {\bf{A}}_{\sf c}(k) \bar {\bf{f}}}{\bar {\bf{f}}^{\sf H} {\bf{B}}_{\sf c}(k) \bar {\bf{f}} } \right) \right)}{\sum_{\ell = 1}^{K} \exp\left(\!\frac{1}{-\alpha}\! \log_2\left(\frac{\bar {\bf{f}}^{\sf H} {\bf{A}}_{\sf c}(\ell) \bar {\bf{f}}}{\bar {\bf{f}}^{\sf H} {\bf{B}}_{\sf c}(\ell) \bar {\bf{f}} } \right)\right)} \!\! \times \! \partial \left(\log_2\left(\frac{\bar {\bf{f}}^{\sf H} {\bf{A}}_{\sf c}(k) \bar {\bf{f}}}{\bar {\bf{f}}^{\sf H} {\bf{B}}_{\sf c}(k) \bar {\bf{f}}}\right) \right)\!/\!\partial {\bf{f}}^{\sf H}   \Bigg]
\nonumber \\
=& \frac{1}{\log 2} \sum_{k = 1}^{K}   \Bigg[ \frac{\exp\left( \frac{1}{-\alpha}  \log_2\left(\frac{\bar {\bf{f}}^{\sf H} {\bf{A}}_{\sf c}(k) \bar {\bf{f}}}{\bar {\bf{f}}^{\sf H} {\bf{B}}_{\sf c}(k) \bar {\bf{f}} } \right) \right)}{\sum_{\ell = 1}^{K} \exp\left(\frac{1}{-\alpha} \log_2\left(\frac{\bar {\bf{f}}^{\sf H} {\bf{A}}_{\sf c}(\ell) \bar {\bf{f}}}{\bar {\bf{f}}^{\sf H} {\bf{B}}_{\sf c}(\ell) \bar {\bf{f}} } \right)\right)}   \times   \left\{\frac{{\bf{A}}_{\sf c}(k) \bar {\bf{f}}}{\bar {\bf{f}}^{\sf H} {\bf{A}}_{\sf c}(k) \bar {\bf{f}}} - \frac{{\bf{B}}_{\sf c}(k) \bar {\bf{f}}}{\bar {\bf{f}}^{\sf H} {\bf{B}}_{\sf c}(k) \bar {\bf{f}}} \right\}  \Bigg]
\end{align}
Similar to this, we calculate $L_2(\bar {\bf{f}})/\partial \bar {\bf{f}}^{\sf H}$ as follows. 
\begin{align} 
\frac{\partial L_2 (\bar {\bf{f}})}{\partial \bar {\bf{f}}^{\sf H}} 
=&\frac{1}{\log 2} \sum_{k = 1}^{K} \left[\frac{{\bf{A}}_k \bar {\bf{f}}}{\bar {\bf{f}}^{\sf H} {\bf{A}}_k \bar {\bf{f}}} - \frac{{\bf{B}}_k \bar {\bf{f}}}{\bar {\bf{f}}^{\sf H} {\bf{B}}_k \bar {\bf{f}}} \right].
\end{align}
The first-order KKT condition holds when 
\begin{align} \label{eq:kkt_inproof}
    &\frac{\partial L_1(\bar {\bf{f}})}{\partial \bar {\bf{f}}^{\sf H}} + \frac{\partial L_2(\bar {\bf{f}})}{\partial \bar {\bf{f}}^{\sf H}} = 0 \\ \Leftrightarrow &
     \sum_{k = 1}^{K}   \Bigg[ \frac{\exp\left( \frac{1}{-\alpha}  \log_2\left(\frac{\bar {\bf{f}}^{\sf H} {\bf{A}}_{\sf c}(k) \bar {\bf{f}}}{\bar {\bf{f}}^{\sf H} {\bf{B}}_{\sf c}(k) \bar {\bf{f}} } \right) \right)}{\sum_{\ell = 1}^{K} \exp\left(\frac{1}{-\alpha} \log_2\left(\frac{\bar {\bf{f}}^{\sf H} {\bf{A}}_{\sf c}(\ell) \bar {\bf{f}}}{\bar {\bf{f}}^{\sf H} {\bf{B}}_{\sf c}(\ell) \bar {\bf{f}} } \right)\right)}   \times  \left\{\frac{{\bf{A}}_{\sf c}(k) \bar {\bf{f}}}{\bar {\bf{f}}^{\sf H} {\bf{A}}_{\sf c}(k) \bar {\bf{f}}} - \frac{{\bf{B}}_{\sf c}(k) \bar {\bf{f}}}{\bar {\bf{f}}^{\sf H} {\bf{B}}_{\sf c}(k) \bar {\bf{f}}} \right\}  \Bigg] +  \sum_{k = 1}^{K} \left[\frac{{\bf{A}}_k \bar {\bf{f}}}{\bar {\bf{f}}^{\sf H} {\bf{A}}_k \bar {\bf{f}}} - \frac{{\bf{B}}_k \bar {\bf{f}}}{\bar {\bf{f}}^{\sf H} {\bf{B}}_k \bar {\bf{f}}} \right] = 0.
\end{align}
Defining ${\bf{A}}_{\sf KKT}(\bar {\bf{f}})$, ${\bf{B}}_{\sf KKT}(\bar {\bf{f}})$, and $\lambda(\bar {\bf{f}})$ as 
\begin{align} \label{eq:lem_A_kkt_inproof}
    &{\bf{A}}_{\sf KKT}(\bar {\bf{f}}) =  \lambda_{\sf num} (\bar {\bf{f}}) \times  \sum_{k = 1}^{K}  \left[ \frac{\exp\left( \frac{1}{-\alpha}  \log_2\left(\frac{\bar {\bf{f}}^{\sf H} {\bf{A}}_{\sf c}(k) \bar {\bf{f}}}{\bar {\bf{f}}^{\sf H} {\bf{B}}_{\sf c}(k) \bar {\bf{f}} } \right) \right)}{\sum_{\ell = 1}^{K} \exp\left(\frac{1}{-\alpha} \log_2\left(\frac{\bar {\bf{f}}^{\sf H} {\bf{A}}_{\sf c}(\ell) \bar {\bf{f}}}{\bar {\bf{f}}^{\sf H} {\bf{B}}_{\sf c}(\ell) \bar {\bf{f}} } \right)\right)} \frac{{\bf{A}}_{\sf c}(k)}{\bar {\bf{f}}^{\sf H} {\bf{A}}_{\sf c}(k) \bar {\bf{f}}} + \frac{{\bf{A}}_k}{\bar {\bf{f}}^{\sf H} {\bf{A}}_k \bar {\bf{f}}} \right] , 
    \end{align}
    \begin{align}
    &{\bf{B}}_{\sf KKT}(\bar {\bf{f}}) = \lambda_{\sf den} (\bar {\bf{f}}) \times  \sum_{k = 1}^{K}  \left[ \frac{\exp\left( \frac{1}{-\alpha}  \log_2\left(\frac{\bar {\bf{f}}^{\sf H} {\bf{A}}_{\sf c}(k) \bar {\bf{f}}}{\bar {\bf{f}}^{\sf H} {\bf{B}}_{\sf c}(k) \bar {\bf{f}} } \right) \right)}{\sum_{\ell = 1}^{K} \exp\left(\frac{1}{-\alpha} \log_2\left(\frac{\bar {\bf{f}}^{\sf H} {\bf{A}}_{\sf c}(\ell) \bar {\bf{f}}}{\bar {\bf{f}}^{\sf H} {\bf{B}}_{\sf c}(\ell) \bar {\bf{f}} } \right)\right)} \frac{{\bf{B}}_{\sf c}(k)}{\bar {\bf{f}}^{\sf H} {\bf{B}}_{\sf c}(k) \bar {\bf{f}}} + \frac{{\bf{B}}_k}{\bar {\bf{f}}^{\sf H} {\bf{B}}_k \bar {\bf{f}}} \right] , \label{eq:lem_B_kkt_inproof} \\
    &\lambda(\bar {\bf{f}}) = \left\{\frac{1}{K  } \sum_{k = 1}^{K} \exp\left( \log_2
    \left(\frac{\bar {\bf{f}}^{\sf H} {\bf{A}}_{\sf c}(k) \bar {\bf{f}}}{\bar {\bf{f}}^{\sf H} {\bf{B}}_{\sf c}(k) {\bf{f}}} \right)^{-\frac{1}{\alpha }} \right)\right\}^{-\frac{\alpha}{\log_2 e}}   \times \prod_{k = 1}^{K} \left(\frac{\bar {\bf{f}}^{\sf H} {\bf{A}}_k \bar {\bf{f}}}{\bar {\bf{f}}^{\sf H} {\bf{B}}_k \bar {\bf{f}}} \right) = \frac{\lambda_{\sf num} (\bar {\bf{f}})}{\lambda_{\sf den} (\bar {\bf{f}})}, \label{eq:lem_lambda_inproof}
\end{align}
the first-order KKT condition is rearranged as
\begin{align}
 & {\bf{A}}_{\sf KKT}(\bar {\bf{f}}) \bar {\bf{f}} = \lambda(\bar {\bf{f}}) {\bf{B}}_{\sf KKT} (\bar {\bf{f}}) \bar {\bf{f}} \Leftrightarrow 
 {\bf{B}}_{\sf KKT} (\bar {\bf{f}})^{-1}{\bf{A}}_{\sf KKT}(\bar {\bf{f}}) \bar {\bf{f}} = \lambda(\bar {\bf{f}}) \bar {\bf{f}}.
\end{align}
This completes the proof.
\qed

\bibliographystyle{IEEEtran}
\bibliography{ref_rsMIMO}

\end{document}